\definecolor{refkey}{gray}{.5}
 \definecolor{labelkey}{gray}{.5}
\definecolor{light}{gray}{.9}
\newlength\fullwidth
\numberwithin{equation}{section}
\DeclareMathSymbol{\leqslant}{\mathalpha}{AMSa}{"36} 
\DeclareMathSymbol{\geqslant}{\mathalpha}{AMSa}{"3E} 
\DeclareMathSymbol{\eset}{\mathalpha}{AMSb}{"3F}     
\def\1{\ifmmode {1\hskip -3pt \rm{I}} \else {\hbox {$1\hskip -3pt \rm{I}$}}\fi}
\newtheorem{Theorem}{Theorem}[section]
\newtheorem{Lemma}[Theorem]{Lemma}
\newtheorem{remark}[Theorem]{Remark}
\newcommand{\cD}{\ensuremath{\mathcal D}}
\newcommand{\cF}{\ensuremath{\mathcal F}}
\newcommand{\bbR}{{\ensuremath{\mathbb R}} }
\newcommand{\bbZ}{{\ensuremath{\mathbb Z}} }
\def\\{\hfill\break}
\def\tthsp{\kern .083333 em}
\def\?{\mskip -10mu}
\def\indbox#1{\hbox to \parindent{\hfil\ #1\hfil} }
\def\hexnumber#1{%
  \ifcase#1 0\or 1\or 2\or 3\or 4\or 5\or 6\or 7\or 8\or
  9\or A\or B\or C\or D\or E\or F\fi}
\font\tenmsa=msam10
\font\sevenmsa=msam7
\font\fivemsa=msam5
\edef\msafamhexnumber{\hexnumber\msafam}%
\mathchardef\restriction"1\msafamhexnumber16
\mathchardef\ssim"0218
\mathchardef\square"0\msafamhexnumber03
\mathchardef\eqd"3\msafamhexnumber2C
\def\QED{\ifhmode\unskip\nobreak\fi\quad
  \ifmmode\square\else$\square$\fi}
\font\tenmsb=msbm10
\font\sevenmsb=msbm7
\font\fivemsb=msbm5
\font\teneufm=eufm10
\font\seveneufm=eufm7
\font\fiveeufm=eufm5
\def\({\left(}
\def\){\right)}
\let\neper=e
\let\ii=i
\outer\def\nproclaim#1 [#2]#3. #4\par{\medbreak \noindent
   \talato(#2){\bf #1 \Thm[#2]#3.\enspace }%
   {\sl #4\par }\ifdim \lastskip <\medskipamount
   \removelastskip \penalty 55\medskip \fi}
\def\thmm[#1]{#1}
\def\teo[#1]{#1}
\def\sttilde#1{%
\dimen2=\fontdimen5\textfont0
\setbox0=\hbox{$\mathchar"7E$}
\setbox1=\hbox{$\scriptstyle #1$}
\dimen0=\wd0
\dimen1=\wd1
\advance\dimen1 by -\dimen0
\divide\dimen1 by 2
\vbox{\offinterlineskip%
   \moveright\dimen1 \box0 \kern - \dimen2\box1}
}
\begin{document}
\date{\today}

\title[Ensemble dependence of fluctuations]{Ensemble dependence of
  fluctuations\\
 and the canonical/micro-canonical equivalence of ensembles.}
\author[N. Cancrini]{Nicoletta Cancrini}
\address{Nicoletta Cancrini\\
DIIIE Universit\`a. L'Aquila, 1-67100 L'Aquila, Italy}
\email{{\tt nicoletta.cancrini@univaq.it}}

\author[S. Olla]{Stefano Olla}
\address{Stefano Olla\\
 CEREMADE, UMR-CNRS\\
 Universit\'{e} Paris Dauphine, PSL Research University\\
 75016 Paris\\ France.}
 \email{{\tt olla@ceremade.dauphine.fr}
}

\thanks{We thanks Joel Lebowitz for pointing our attention to the microcaconical fluctuation formula
of reference \cite{LPV}, that motivated the present work. 
This paper has been partially supported by the
  European Advanced Grant {\em Macroscopic Laws and Dynamical Systems}
  (MALADY) (ERC AdG 246953) and by ANR-15-CE40-0020-01 grant LSD..
}

\begin{abstract}
We study the equivalence of microcanonical and canonical ensembles in
continuous systems, in the sense of the convergence of the
corresponding Gibbs measures. This 
is obtained by proving a local central limit theorem and a local
large deviations principle. As an application we prove a formula due to
Lebowitz-Percus-Verlet. It gives mean square fluctuations of an extensive observable,
like the kinetic energy, in a classical micro canonical ensemble at fixed energy.
\end{abstract}
\maketitle

\thispagestyle{empty}

\section{Introduction}
\label{sec:introduction}
The relation between averages of observables of a physical system with respect to different 
phase-space
ensembles permits to prove what is called the {\it equivalence of ensembles}. 
That is, in the thermodynamic limit (size of the system goes to $\infty$),  
the expected value of a phase function, 
corresponding to intensive or per particle properties of the system,
 is independent of the ensemble used.
There are many different aspects and approaches  
to the equivalence of ensembles, and it will be too long to review all the literature on the subject.
For some general mathematical work we mention \cite{stroock1991microcanonical} and \cite{T}. 
We are interested here, for a system of finite $N$ particles,
 in the difference between the micro canonical average of an 
observable $A$ on a given energy shell (micro canonical manifold), 
and the canonical average of $A$ at the corresponding temperature:
\begin{equation}
  \label{eq:30}
  \Delta_N(A,u) \ =\ \left< A | u\right>_N \ - \ \left< A \right>_{N, \beta_N(u)} 
\end{equation}
where $N u$ is the value of the energy fixed in the micro canonical average, while $\beta_N(u)$ 
is the corresponding inverse temperature determined such that 
the canonical average of the energy per particle is $u$.
 We will restrict our considerations to situations 
far from phase transitions (far from thermodynamic  singularities), 
and we expect that the difference \eqref{eq:30} goes to 0 in the thermodynamic limit ($N\to\infty$). 
As the micro canonical average is just a conditional expectation 
of the canonical average for a given value of the total energy,  
this is a consequence of the concentration of the distribution of the energy per 
particle in the canonical distribution around the expected value, due to the law of large numbers. 
If $A$ is uniformly bounded in $N$, or local, 
and the micro canonical expectation $\left< A | u\right>_N$ is enough regular in $u$,
 $\Delta_N(A,u) \to 0$ is an easy consequence of a 
large deviation principle for the distribution of the energy under the canonical distribution (see
\autoref{sec:micro-canon-distr}). But here we are principally interested in extensive observables, 
like the total kinetic energy $K_N$, and their fluctuations in the micro canonical ensemble.   
In particular the micro canonical fluctuations of the total kinetic energy is greatly 
affected, and reduced, by the global constraint on the total energy and 
the asymptotic micro canonical variance, properly normalized, differs from the canonical one.  
In order to study such difference we need to compute explicitly the first order of $\Delta_N(A,u)$.

More precisely, let $\left<K_N;K_N|u\right>_N = \left<K_N^2|u\right>_N - \left<K_N|u\right>_N^2$, 
the micro canonical variance of the kinetic energy, that typically has order $N$. The canonical variance of 
$K_N$ depends only on the maxwellian distribution on the velocities and is equal to $\frac{Nn}{2\beta^{2}}$, where $n$ is the spacial dimension. 
It follows from the results contained in \autoref{sec:lebow-perc-verl} that
\begin{equation}
  \label{eq:31}
  \lim_{N\to\infty} \frac 1N \left<K_N;K_N|u\right>_N \ =\ \frac{n}{2\beta^{2}} \left(1 - \frac{n}{2 C(\beta)}\right)
\end{equation}
where the energy $u$ and inverse temperature $\beta$ are connected by the thermodynamic relation, 
and $C(\beta)$ is the heat capacity per particle, defined as $C(\beta) = \frac {d}{d{\beta^{-1}}} u(\beta)$.  
Formula \eqref{eq:31} was formally derived in \cite{LPV}, and its
rigorous derivation is the main motivation for the present article.
We actually prove \eqref{eq:31} under some regularity conditions on the micro canonical expectations, 
and in its finite $N$ version, where we also compute explicitly the next order term 
(see formula \eqref{eq:lpvk}). 
We then provide one explicit example where these regularity conditions are satisfied, but we expect that they are verified for a large class of systems.
Formula \eqref{eq:31} is actually a consequence of a more general formula \eqref{eqfin-th}, also formally deduced in  \cite{LPV},
that gives the explicit first order correction for $\Delta_N(A,u)$. 

In the proof of \eqref{eqfin-th} we use a strong form of the large deviations for the energy distribution
under the canonical measure, i.e. the asymptotic expression \eqref{funzionale} for the  density of the 
canonical probability distribution of the energy. This strong \emph{local} large deviation expression 
is proven in section \autoref{sec:local-large-devi}, 
as consequence of an Edgeworth expansion in the corresponding local central limit theorem. 
This expansion is obtained in \autoref{sec:lclt} under some condition of uniform bounds in $N$ for 
the first 4 derivatives of the free energy $f_N(\beta)$ of the canonical measure of the $N$-system.

Even though many of the arguments and results in sections 2,3 and 4 are well known in particular 
in the probabilistic literature, we decided to present this article as self contained as possible. For example 
the Edgeworth expansion argument we use in \autoref{sec:lclt} is essentially the same as used in 
Feller book \cite{F} for independent variables, but we could not find a precise reference for this statement
 for dependent continuous variables under canonical Gibbs distributions (in discrete setting see \cite{DS}, 
and general setting for dependent variables is treated in \cite{IL}).

\section{The Local Central Limit Theorem 
and its Edgeworth expansion}
\label{sec:lclt}

Consider $N$ particles, the momentum and coordinates given by
${\bf p}:=(p_1,\cdots, p_N)$, $p_i \in \bbR^n$ and ${\bf q}:=(q_1,\cdots,
q_N), q_i\in M$, where $M$ is a manifold of dimension $n$. 
The phase space is $\Omega^N = (\bbR^n\times M)^N$. 
Let ${\bf\bar{q}}_i=(q_1,\cdots,q_{i-1},q_{i+1},\cdots,q_N)$ be the coordinates of all the particles except that of the $i$ particle. To simplify the notation  we take $n=1$.


We want to consider systems whose Hamiltonian can be written as
\begin{equation*}
H_N=\sum_{i=1}^N X_i
\end{equation*}
where \begin{equation*}
X_i:=\frac{p_i^2}{2}+V(q_i,{\bf\bar{q}}_i)\quad i=1,\cdots, N\\
\end{equation*}
where $V$ is a regular functions.
Define for $\beta>0$:
$$
f_N(\beta):=\frac{1}{N}\, \log\int_{\Omega^N}\, e^{-\beta H_N}d{\bf p}d{\bf q}.
$$
Notice that the integration in the $\bf p$ can always be done
explicitly and
\begin{equation*}
  f_N(\beta) = \frac 12 \log\left( 2\pi \beta^{-1} \right) + \frac{1}{N}\,
  \log\int_{M^N}\, e^{-\beta \sum_i^N V(q_i,{\bf\bar{q}}_i) } d{\bf q}.
\end{equation*}

\textbf{Assumption:}
We assume that there is an interval of values of $\beta$ such that
$f_N(\beta)$ exists, together with its first four derivatives, and
that are uniformly bounded in $N$: 
\begin{equation}
  \label{eq:11}
  \sup_N |f_N^{(j)}(\beta)| \le C_{\beta} , \quad j= 0,1,2,3,4 \
\end{equation}
with $C_{\beta}$ locally bounded in closed bounded intervals not including
$\beta=0$.  

 The canonical Gibbs measure associated to $H_N$ and temperature
 $\beta^{-1}$ is defined by
\begin{equation}\label{canmeas}
\nu_{\beta,N}(d{\bf p}\, d{\bf q})=\exp\{-\beta H_N({\bf p},{\bf
  q}) - Nf_N(\beta)\}d{\bf p}\,d{\bf q} 
\end{equation}

Defining $h_N:=H_N/N$, direct calculations give:
\begin{equation}
  \begin{split}
    f_N'(\beta) &= -\langle h_N\rangle_{\beta,N} = -u_{N}(\beta), \\
    f_N''(\beta) &= N \langle(h_N-u_{N}(\beta))^2\rangle_{\beta,N} \\
     f_N'''(\beta) &= -N^2 \langle(h_N-u_{N}(\beta))^3\rangle_{\beta,N}\\
     f_N''''(\beta) &= N^3 \langle(h_N -u_{N}(\beta))^4\rangle_{\beta,N} - 3 N   f_N''(\beta)^2.
    \label{eq:5}
  \end{split}
\end{equation}
where  we indicated $<\cdot>_{\beta,N}$ the average 
 w.r.t. the canonical measure defined in (\ref{canmeas}).

Notice that, thanks to the presence of the kinetic energy,
\begin{equation*}
  \inf_N f_N''(\beta) := \sigma_-(\beta) > \frac{1}{2\beta}.
\end{equation*}

Define the centered energy 
$$
S_N:=\sum_{j=1}^N(X_j - u_{N}(\beta))
$$
and its characteristic function
\begin{equation}\label{funcar}
\varphi_{\beta,N}(t):= \langle e^{it\,S_N}\rangle_{\beta,N}, \qquad t\in \mathbb R.
\end{equation}

By performing explicitly the integration over $\bf p$, we have
\begin{equation*}
  \varphi_{\beta,N}(t) = \left(\frac{1}{1- i t \beta^{-1}}\right)^{N/2}
  \langle e^{it\sum_j (V(q_i,{\bf\bar{q}}_i) - v_N)}\rangle_{N,\beta} 
\end{equation*}
where $Nv_N = \langle\sum_j (V(q_i,{\bf\bar{q}}_i)\rangle_{N,\beta}$. 
Consequently we have the bound:
\begin{equation}\label{cfub}
|\varphi_{\beta,N}(t)|\le \left(\frac{\beta^2}{t^2+\beta^2}\right)^\frac{N}{4},
\end{equation}
thus $|\varphi_{\beta,N}(t)|<1$ for $t\neq 0$ (i.e. is a characteristic
function of a non-lattice distribution). 
Furthermore
$|\varphi_{\beta,N}(t)|$ is integrable for $N\ge 3$, and
  by the Fourier inversion theorem (see chapter XV.3 of \cite {F})
 the probability density function of the variable $S_N$ exists for $N\ge 3$.
 Observe also that
 \begin{equation}\label{phizero}
   \begin{split}
     \varphi_N'(0) = 0, \ \varphi_N''(0) = -
     N f''_N(\beta), \ \varphi_N'''(0) = - i N f'''_N(\beta) \\
     \varphi''''_N(0) = N f''''_N(\beta) + 3 N^2  f''_N(\beta)^2.
   \end{split}
 \end{equation}

In the following we denote the normal gaussian density by
\begin{equation*}
  \phi(x) = \frac1{\sqrt{2\pi}} e^{-x/2}.
\end{equation*}
Let $\{H_j(x)\}_{j\ge 0}$ the Hermite polynomials defined by
\begin{equation}\label{Hermitep}
\frac{d^j}{dx^j}\phi(x) = (-1)^jH_j(x)\phi(x)
\end{equation}
The characteristic property of Hermite polynomials is that the
Fourier transform of $H_j(x) \phi(x)$ is given by
\begin{equation*}
  \int_{-\infty}^{+\infty} H_j(x)\phi(x) e^{itx} dx = (it)^j \hat \phi(t)
\end{equation*}
where $\hat \phi(t)=e^{-\frac{t^2}{2}}$. Recall that $H_0= 1$, $H_1(x) = x$, $H_3(x)=x^3-3x$, $H_4(x)=x^4-6x+3$ and $H_6(x)=x^6-15x^4+45x^2-15$.

We can now state the Local Central Limit Theorem we need in the rest of the article. 
\begin{Theorem}\label{the:lclt}
Assume that $\beta$ is such that the conditions (\ref{eq:11}) are satisfied.   
Define 
$$
Y_N:=\frac{\sum_{i=1}^N(X_i-u_{N}(\beta))}{\sqrt{N f_N''(\beta)}},
$$
then the density distribution $g_{\beta,N}(x)$ of
$Y_N$ for $N\ge 3$ exists  and as $N\rightarrow\infty$ 
\begin{equation}\label{eq:lclt1}
g_{\beta,N}(x)-\phi(x)-\phi(x)\left(\frac{Q_{\beta,N}^{(3)}(x)}{\sqrt N}  + 
\frac {Q_{\beta,N}^{(4)}(x)}N  \right) 
 =o\left(\frac{1}{N}\right)\, K_N(\beta)
\end{equation} 
where
\begin{align}\label{poliP}
Q_{\beta, N}^{(3)} (x)&=\frac{f_N'''(\beta)}{3!f_N''(\beta)^\frac{3}{2}}H_3(x)
\\
Q_{\beta,N}^{(4)}(x)&=\frac{f_N''''(\beta)}{4! f_N''(\beta)^2}\,H_4(x)+ \frac 12\left(\frac{f_N'''(\beta)}{3!f_N''(\beta)^\frac{3}{2}}\right)^2H_6(x)\label{P4}
\end{align}
and $K_N(\beta)$  is bounded in $N$, uniformly on bounded closed intervals of $\beta>0$. 
\end{Theorem}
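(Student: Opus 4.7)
The plan is to follow the classical Fourier-inversion approach to the local CLT with Edgeworth expansion (Feller \cite{F}, Ch.~XVI), adapted because the summands of $S_N$ are dependent and the smoothness input is the assumption \eqref{eq:11} on the free energy rather than moment bounds on a single summand. Set $\sigma_N := \sqrt{N f_N''(\beta)}$, so $\psi_N(t):=\varphi_{\beta,N}(t/\sigma_N)$ is the characteristic function of $Y_N$. By \eqref{cfub}, $\psi_N \in L^1(\mathbb{R})$ for $N\ge 3$, and Fourier inversion gives $g_{\beta,N}(x)=(2\pi)^{-1}\int \psi_N(t)\,e^{-itx}\,dt$. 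Since $\int H_j(x)\phi(x) e^{itx}\,dx = (it)^j e^{-t^2/2}$, the target approximation $\phi(x)[1+Q_{\beta,N}^{(3)}(x)/\sqrt N+Q_{\beta,N}^{(4)}(x)/N]$ is the inverse Fourier transform of an explicit polynomial-times-Gaussian $\pi_N(t)$. The theorem thus reduces to
\[
\sup_{x}\Bigl|\int_{\mathbb{R}}\bigl(\psi_N(t)-\pi_N(t)\bigr)e^{-itx}\,dt\Bigr|=o(1/N).
\]

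I would split the $t$-integral at $|t|=\delta\sqrt N$ for a small, $N$-independent $\delta>0$. On the tail $|t|>\delta\sqrt N$, \eqref{cfub} yields $|\psi_N(t)|\le\bigl(\beta^2/((t/\sigma_N)^2+\beta^2)\bigr)^{N/4}$; combined with the lower bound $\sigma_N^2>N/(2\beta)$ noted right after \eqref{eq:5}, this gives $|\psi_N(t)|\le(1+c\delta^2)^{-N/4}$, exponentially small in $N$, and the tail of $\pi_N$ is super-exponentially small thanks to $e^{-t^2/2}$ with $t^2>\delta^2 N$. The tail contribution is therefore $O(e^{-cN})=o(1/N)$.

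The substantive step is the bulk. From the identity
\[
\log\varphi_{\beta,N}(s)\;=\;N\bigl(f_N(\beta-is)-f_N(\beta)\bigr)\;-\;isN u_N(\beta),
\]
the derivatives at $s=0$ reproduce the cumulant identities \eqref{phizero} via \eqref{eq:5}. Taylor expanding in $s=t/\sigma_N$ to order four gives, uniformly on $|t|\le\delta\sqrt N$,
\[
\log\psi_N(t) \;=\; -\frac{t^2}{2}+\frac{c_3\,t^3}{\sqrt N}+\frac{c_4\,t^4}{N}+r_N(t),\qquad |r_N(t)|\le C\,\frac{|t|^5}{N^{3/2}},
\]
with $c_3,c_4,C$ locally bounded in $\beta$ by \eqref{eq:11}. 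Choosing $\delta$ small enough that $|t|^3/\sqrt N\le \delta t^2$ on the bulk, the correction $\log\psi_N(t)+t^2/2$ is dominated in modulus by $t^2/4$, so $|\psi_N(t)|\le e^{-t^2/4}$ there. Expanding $e^{u}$ with $u$ equal to the three correction terms and keeping everything through order $1/N$, the remainder takes the form $e^{-t^2/2}E_N(t)$ with $|E_N(t)|\le C'(1+|t|^{9})/N^{3/2}$; integrated against $e^{-itx}$, this contributes $O(N^{-3/2})=o(1/N)$. The Hermite identity above matches the polynomial part of the expansion with the target $Q^{(3)}_{\beta,N}/\sqrt N+Q^{(4)}_{\beta,N}/N$, and tracking constants shows the whole bound is uniform on compact $\beta$-intervals, which defines $K_N(\beta)$.

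The principal obstacle is the uniform-in-$N$ Taylor remainder $O(|t|^5/N^{3/2})$ given only four derivatives of $f_N$ bounded by \eqref{eq:11}. The clean route is analyticity: $Z_N(\cdot)$ is holomorphic on $\{\Re z>0\}$, hence so is $f_N$, and a uniform bound on the first four derivatives on a real interval extends via Cauchy's integral formula to a uniform bound on $f_N^{(5)}$ on a slightly smaller complex set, enough to run the Taylor argument. Alternatively, Taylor's theorem with Peano remainder and uniform continuity of $f_N^{(4)}$ yields $r_N(t)=o(t^4/N)$, still sufficient for the $o(1/N)$ conclusion. In either case, the scheme is Feller's, with \eqref{cfub} (arising from the explicit momentum integration in the kinetic energy) playing the role of the classical non-lattice bound $|\varphi|\le 1-\eta$ away from the origin.
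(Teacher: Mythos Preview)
Your approach is essentially the paper's: Fourier inversion, splitting the $t$-integral at $|t|=\delta\sqrt{Nf_N''}$, tail control via \eqref{cfub}, and bulk control by Taylor expanding the log characteristic function, exactly as in Feller. Your identity $\log\varphi_{\beta,N}(s)=N\bigl(f_N(\beta-is)-f_N(\beta)\bigr)-isNu_N(\beta)$ is a clean shortcut to the cumulant relations that the paper instead extracts by directly computing $\psi_N^{(j)}(0)$ from \eqref{phizero}; otherwise the two arguments are the same, down to the $e^{-t^2/4}$ envelope on the bulk.

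One caveat on your remainder discussion: the Cauchy-integral route does not work as written. Cauchy bounds $f_N^{(5)}$ from a bound on $f_N$ along a complex \emph{contour}, not from bounds on four real derivatives; assumption \eqref{eq:11} is a real-axis statement and does not by itself propagate to a complex neighbourhood (and $f_N=\tfrac1N\log Z_N$ need not even be defined where $Z_N$ vanishes off the real axis). Your Peano alternative, $r_N(t)=o(t^4/N)$, is precisely what the paper uses: for each $\epsilon>0$ it picks $\delta$ so that $|\psi_N(t)-N(it)^2\gamma_N(it)|\le \epsilon N(f_N'')^2|t|^4$ on $|t|<\delta$, integrates against $e^{-t^2/4}$, and sends $\epsilon\to 0$. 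Note, however, that this step tacitly requires the $o(|t|^4)$ to be uniform in $N$ (equivalently, a form of equicontinuity of $f_N^{(4)}$ at the base point), which is slightly more than the pointwise bound in \eqref{eq:11}; the paper leaves this implicit as well.
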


\begin{proof} 

We follow the proof of theorem 2 in chapter XVI.2 of \cite{F} for independent random variables.
By (\ref{cfub}) and the Fourier inversion theorem 
the left hand side of (\ref{eq:lclt1}) exists for $N\ge 3$. To simplify the notation we do not write the dependence on $\beta$ of $f_{\beta,N}$, $\varphi_{\beta,N}$ and their derivatives.
Consider the function
\begin{equation}\label{PhiN}
 \widehat\Phi_N(t) = \varphi_N\Bigl(\frac{t}{\sqrt{N f_N''}}\Bigr)
-e^{-\frac{t^2}{2}}\left(1+P_N(\frac{it}{\sqrt{N f_N''}})\right)
\end{equation}
where $\varphi_N(t/\sqrt{Nf''_N})$  is the Fourier transform of $g_{\beta,N}$ (see (\ref{funcar}) ) and $P_N(it)$ is an appropriate polynomial in the variable $it$.
We want to show that 

\begin{equation}\label{integral}
\Delta_N= \int_{-\infty}^\infty\left|\hat\Phi_N(t)\right|\,dt =
o\left(\frac{1}{N}\right) . 
\end{equation}

Choose $\delta>0$ arbitrary but fixed. There exists  a number
$q_\delta<1$ such that
$\bigl(\frac{\beta^2}{t^2+\beta^2}\bigr)^{\frac{1}{4}}<q_\delta $ for
$|t|\ge \delta$. 
The contribution of the intervals $|t|>\delta\sqrt{f_N''N}$ to
the integral (\ref{integral}), using (\ref{cfub}),  is bounded by
\begin{equation}
 q_\delta^{N-3}\int_{-\infty}^\infty
 \left(\frac{\beta^2}{(t/\sqrt{f_N'' N})^2+\beta^2}\right)^3dt 
+\int_{|t|>\delta\sqrt{N f_N''}}e^{-\frac{t^2}{2}} |P_N(\frac{it}{\sqrt{N f_N''}})| \,dt
\end{equation}
and this tends to zero more rapidly than any power of $1/N$.

We now estimate the contribution to $\Delta_N$ from the region
${|t| \le \delta\sqrt{N f_N''(\beta)}}$.
Let us rewrite 
\begin{equation}\label{deltaN}
  \Delta_N= \int_{-\infty}^\infty
  e^{-\frac{t^2}{2}}\left|e^{\psi_N\left(t/\sqrt{f^{''}_N}\right)} - 1-P_N(\frac{it}{\sqrt{N f_N''}}) \right|\,dt 
\end{equation}
where\footnote{For a  complex number $z$ such that $|z|<1$, we define
  $\log(1+z) = \sum_n\frac{(-z)^n}{n}$.}
$$
\psi_N(t)=\log\varphi_N(t)+\frac{1}{2}Nf_N'' t^2 .
$$

The function $\psi_N(t)$ is four times differentiable and in $t=0$ its derivatives are given by

\begin{equation*}
  \psi_N'(t) = \frac{\varphi_N'(t)}{\varphi_N(t)} + Nf_N'' t, \qquad
  \psi_N'(0) = 0.
\end{equation*}
\begin{equation*}
  \psi_N''(t) = \frac{\varphi_N''(t)}{\varphi_N(t)} -
  \frac{\varphi_N'(t)^2}{\varphi_N(t)^2} + Nf_N'',
  \qquad \psi_N''(0) = 0.
\end{equation*}
\begin{equation*}
  \psi_N'''(t) = \frac{\varphi_N'''(t)}{\varphi_N(t)} -
  \frac{\varphi_N'(t)\varphi_N''(t)}{\varphi_N(t)^2}
  -\frac{2\varphi_N'(t)\varphi_N''(t)}{\varphi_N^2(t)} + \frac{2\varphi_N'(t)^3}{\varphi_N(t)^3} ,\qquad
   \psi_N'''(0) = -i N f_N'''.
\end{equation*}

\begin{equation*}
    \psi_N''''(t) 
    = \frac{\varphi_N''''(t)}{\varphi_N(t)} -
    \frac{3\varphi_N'(t)\varphi_N'''(t)}{\varphi_N(t)^2}
    -\frac{3\varphi_N''(t)^2}{\varphi_N^2(t)} +
    \frac{4\varphi_N'(t)^2\varphi_N''(t)}{\varphi_N(t)^3} +
    \frac{6\varphi_N'(t)^2 \varphi_N''(t)}{\varphi_N(t)^3} -  \frac{6\varphi_N'(t)^4}{\varphi_N(t)^4} \end{equation*}
    \begin{equation*}
    \psi_N''''(0) 
    = \varphi_N''''(0) - 3 \varphi_N''(0)^2 =  N f_N'''' .
\end{equation*}
where we used relations (\ref{phizero}). Let $(it)^2\gamma_N(it)$ be the Taylor approximation for $\psi_N(t)/N$. Where $\gamma_N(it)$ is a polynomial of degree $2$ with $\gamma_N(0)=0$; it is uniquely determined by the property
\begin{equation}
\psi_N(t)-N\,(it)^2\gamma_N(it)=N o(|t|^4)
\end{equation}
and it is given by
$$
\gamma_N(it):=\frac{f_N'''}{3!}\, it+\frac{f_N''''}{4!}(it)^2
$$
We choose
$$
P_N(it):=\sum_{k=1}^2\frac{1}{k!}\,\left[N\,(it)^2\,\gamma_N(it)\right]^k
$$
then $P_N(it)$ is a polynomial in the variable $it$ with real coefficients depending on $N$ and $\beta$.
We use the inequality
$$
\left|e^\alpha-1-\sum_{k=1}^2\frac{\beta^k}{k!}\right|\le\,\left|e^\alpha-e^\beta\right|+\left|e^\beta-1-\sum_{k=1}^2\frac{\beta^k}{k!}\right|\le\, e^\gamma\left(|\alpha-\beta|+\frac{|\beta|^3}{3!}\right)
$$
with $\gamma=max\{|\alpha|,|\beta|\}$.
Furthermore we choose $\delta $ so small that for $|t|<\delta$
$$
|\psi_N(t)-N\,(it)^2\gamma_N(it)|\le \epsilon\,(f_N'')^2\,N|t|^4
$$
and
\begin{equation*}\label{delta2}
\left|\psi_N(t)\right|<\,N\,\frac 14\,f_N''t^2\quad 
\quad\left|\gamma_N(it)\right|\le a_N |t|\le \,\frac 14\,f_N''
\end{equation*}
provided that $a_N >1+ |f_N'''|$. For  $|t|<\delta\sqrt{Nf_N''}$ the integrand in (\ref{deltaN})  can be bounded by 
\begin{equation}
e^{-\frac 14\, t^2}\,\left(\epsilon\,\frac{t^4}{N}+\frac{a_N^3}{3!}\,\left(\frac{|t|^3}{\sqrt{Nf_N''}}\right)^3\right)
\end{equation}
As $\epsilon$ is arbitrary we have that (\ref{integral}) is proved. The function $\Phi_N(t)$ defined in (\ref{PhiN}) is the Fourier transform of
\begin{equation}
g_{\beta,N}(x)-\phi(x)-\phi(x)\sum_{k=1}^8b_{Nk} H_k(x)
\end{equation}
where $b_{Nk}$ are appropriate coefficients depending on $N$ and $H_k(x)$ are the Hermite polynomials defined in (\ref{Hermitep}). If we rearrange the terms of the sum in ascending powers of $1/\sqrt{N}$ we get an expression of the form postulated in the theorem
plus terms involving powers $1/N^k$ with $k>1$ that can be dropped and obtain the result. 
\end{proof}

The same argument leads to higher order expansions, but the terms cannot be expressed by simple explicit formulas. We have the following 

\begin{Theorem}\label{the2:lclt}
Assume that $f_N''(\beta),\cdots, f_N^{(k)}(\beta)$ exist and are
uniformly bounded in $N$.  
Define $$Y_N:=\frac{\sum_{i=1}^N(X_i-u_N(\beta))}{\sqrt{N f_N''(\beta)}}$$
then the density distribution $g_{\beta,N}(x)$ of $Y_N$ for $N\ge 3$ exists  and
as $N\rightarrow\infty$ 

\begin{equation}\label{eq:lclt}
g_{\beta,N}(x)-\phi(x)-\phi(x)\sum_{j=3}^k\frac{1}{N^{\frac{1}{2}j-1}}Q^{(j)}_{\beta, N}(x)=o\left(\frac{1}{N^{\frac{1}{2}k-1}}\right)
\end{equation} 
uniformly in $x$. Here $\phi(x)$ is the standard normal density, $Q^{(j)}_{\beta, N}$
 is a real polynomial depending only on $f_N''(\beta),\cdots, f_N^{(k)}(\beta)$, 
and whose coefficients are uniformly bounded in $N$. 
\end{Theorem}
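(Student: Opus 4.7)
The plan is to follow the argument of Theorem~\ref{the:lclt} step by step, with the Taylor expansion of $\psi_N(t)=\log\varphi_N(t)+\tfrac12 Nf''_N(\beta)t^2$ pushed from order $4$ to order $k$. As before, I would define
$$\widehat\Phi_N(t):=\varphi_N\!\Bigl(\tfrac{t}{\sqrt{Nf''_N(\beta)}}\Bigr)-e^{-t^2/2}\Bigl(1+P_N\Bigl(\tfrac{it}{\sqrt{Nf''_N(\beta)}}\Bigr)\Bigr)$$
for a polynomial $P_N$ to be chosen below, and estimate $\Delta_N=\int|\widehat\Phi_N(t)|\,dt$ by splitting at $|t|=\delta\sqrt{Nf''_N(\beta)}$. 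For $|t|>\delta\sqrt{Nf''_N(\beta)}$ the tail bound (\ref{cfub}) applies verbatim, since the subtracted polynomial-times-Gaussian decays super-polynomially over the tails.

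For the central region I would use the identity $\varphi_N(t)=\exp\bigl[N\bigl(f_N(\beta-it)-f_N(\beta)+itf'_N(\beta)\bigr)\bigr]$ (coming from $u_N=-f'_N(\beta)$ and direct integration), together with the assumption that $f_N^{(2)},\ldots,f_N^{(k)}$ are uniformly bounded in $N$, to write
$$\psi_N(t)\;=\;N(it)^2\gamma_N(it)+N\,o(|t|^k),\qquad (it)^2\gamma_N(it):=\sum_{j=3}^{k}\frac{(-1)^{j}}{j!}(it)^j f_N^{(j)}(\beta),$$
so that $\gamma_N$ is a polynomial in $it$ of degree $k-2$ with coefficients uniformly bounded in $N$. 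I would then set
$$P_N(it)\;:=\;\sum_{m=1}^{k-2}\frac{1}{m!}\bigl[N(it)^2\gamma_N(it)\bigr]^m,$$
the $(k-2)$-nd partial sum of $\exp\bigl(N(it)^2\gamma_N(it)\bigr)-1$, again a polynomial in $it$ whose coefficients are polynomial in the $f_N^{(j)}(\beta)$ and thus uniformly bounded in $N$. Applying the elementary inequality
$$\Bigl|e^\alpha-1-\sum_{m=1}^{k-2}\tfrac{\beta^m}{m!}\Bigr|\;\le\;e^{\max(|\alpha|,|\beta|)}\Bigl(|\alpha-\beta|+\tfrac{|\beta|^{k-1}}{(k-1)!}\Bigr)$$
with $\alpha=\psi_N(t/\sqrt{Nf''_N(\beta)})$ and $\beta=N(it/\sqrt{Nf''_N(\beta)})^2\gamma_N(it/\sqrt{Nf''_N(\beta)})$, and shrinking $\delta$ so that $\max(|\alpha|,|\beta|)\le t^2/4$ and $|\alpha-\beta|\le\varepsilon\,|t|^k/N^{k/2-1}$ on $|t|\le\delta\sqrt{Nf''_N(\beta)}$, the integrand is dominated there by $e^{-t^2/4}\bigl(\varepsilon|t|^k/N^{k/2-1}+\text{higher-order terms}\bigr)$, whose integral is $o(N^{-(k/2-1)})$.

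It then remains to Fourier-invert. Since $\int H_\ell(x)\phi(x)e^{itx}\,dx=(it)^\ell e^{-t^2/2}$, $\widehat\Phi_N$ is the Fourier transform of $g_{\beta,N}(x)-\phi(x)-\phi(x)\sum_\ell b_{N,\ell}H_\ell(x)$ for coefficients $b_{N,\ell}$ read off from the monomials of $P_N(it/\sqrt{Nf''_N(\beta)})$. Expanding $P_N$ explicitly and regrouping by total power of $1/\sqrt N$ — each $m$-fold product of monomials of degrees $j_1,\dots,j_m\in\{3,\dots,k\}$ in $(it)^2\gamma_N$ contributes at order $N^{-\sum_l (j_l-2)/2}$ in $(it)^{\sum_l j_l}$ — the contributions of order $N^{-(j/2-1)}$ for each $j\in\{3,\dots,k\}$ collect into a polynomial $Q^{(j)}_{\beta,N}(x)$ in the Hermite basis with coefficients polynomial in $f_N^{(2)}(\beta),\dots,f_N^{(j)}(\beta)$ (hence uniformly bounded in $N$); the remaining powers $N^{-s}$ with $s>k/2-1$ are absorbed into the $o(1/N^{k/2-1})$ remainder.

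The main obstacle is to secure the $o(|t|^k)$ Taylor remainder for $\psi_N/N$ \emph{uniformly in $N$}. In the $k=4$ case this is dispatched in a single line by shrinking $\delta$; for general $k$ it will require either a mild strengthening of (\ref{eq:11}) — for instance, uniform boundedness of $f_N^{(k)}$ on a whole neighborhood of $\beta$, or equicontinuity of $f_N^{(k)}$ at $\beta$ — or else an explicit modulus of continuity for $f_N^{(k)}$ extracted from moment bounds on $S_N$ under canonical measures with parameter near $\beta$. Once this uniformity is in hand, the rest of the argument is combinatorial bookkeeping in the Hermite basis, in complete analogy with the $k=4$ computation of Theorem~\ref{the:lclt}.
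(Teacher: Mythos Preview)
Your proposal is correct and is precisely the approach the paper itself indicates: the paper does not give a separate proof of Theorem~\ref{the2:lclt} but simply states, immediately before it, that ``the same argument [as Theorem~\ref{the:lclt}] leads to higher order expansions, but the terms cannot be expressed by simple explicit formulas.'' Your write-up is in fact more detailed than the paper's treatment, and the technical obstacle you flag---uniformity in $N$ of the $o(|t|^k)$ Taylor remainder for $\psi_N$---is a genuine point that the paper leaves implicit even in the $k=4$ case (it is absorbed into the single sentence ``we choose $\delta$ so small that\dots''); the natural resolution, as you note, is the local uniform boundedness of $f_N^{(j)}$ in $\beta$ already built into assumption~\eqref{eq:11} via the constant $C_\beta$.
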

Note that Theorem \ref{the:lclt} is Theorem \ref{the2:lclt} for $k=4$ and taking $k>4$ does not improve our estimates and results.

\begin{remark}
Theorem \ref{the:lclt} is stated for continuous random variables $X_i$. It can be stated also for discrete random variables, in the same form once $|\varphi_{\beta,N}(t)|$, the characteristic function of $S_N$, is integrable. In spin systems with finite range interacting potentials, like the Ising model,  this is the case, see \cite{DT} and \cite{CM} where a Gaussian upper bound on the characteristic function is proved.
\end{remark}

\bigskip

\section{Local Large Deviations and Boltzmann formula}
\label{sec:local-large-devi}

In this section we study the energy distribution under the canonical measure.
With reasonable conditions on the interaction potential $V$,
$f_N(\beta)$ is finite for every $\beta>0$. We can extend its
definition to all $\beta \in \bbR$ denoting $f_N(\beta) = +\infty$ for
$\beta\le 0$. 

We define the Frenchel-Legendre transform of $f_N(\beta)$:
\begin{equation}\label{z*}
f_{N*}(u):=\sup_{\beta}\,\{-\beta u - f_N(\beta)\} = \sup_{\beta>0} \,\{-\beta u - f_N(\beta)\} 
\end{equation}

Let $\cD_{f_N}$, $\cD_{f_{N*}}$ the corresponding domain of
definition. 
For any $u\in\cD_{f_{N*}}$ there exists a unique $\beta\in\cD_{f_N}$ such that 
\begin{equation}\label{unique}
 u =-f_N'(\beta)\quad \text{and}\quad \beta=-f_{N*}'(u).
\end{equation}

Under the canonical measure (\ref{canmeas}) $h_N$ can be seen as a
normalized sum of random variables. We denote by $\cF_{N,\beta}(u)$ the
density of its probability distribution.  
For any integrable function $F\colon\bbR\rightarrow\bbR$
\begin{equation}\label{fne}
\int_{\Omega^N}F(h_N)d\nu_{\beta, N}=\int_\bbR F(u)\cF_{N,\beta}(u)du=
\int_\bbR F(u)e^{ -N[\beta u +f_N(\beta)]}W_N(u) du
\end{equation}
where 
\begin{equation}\label{WNe}
W_N(u):=\frac{d}{du}\int_{h_N\le u} d{\bf p} d{\bf q}
\end{equation}

\begin{Theorem}\label{teo1}
Let $u\in \mathcal D_{f_{N*}}$ and $\gamma=\gamma_N(u)$ 
defined by (\ref{unique}) be such that $f_N(\gamma)$ satisfies
\eqref{eq:11}. Then, for large $N$,
\begin{equation}
 W_N(u)  = e^{-Nf_{N*}(u)} 
   \sqrt{\frac{N\,f_{N*}''(u)}{2\pi}} \left( 1+\frac{Q_{\gamma(u),N}^{(4)}(0)}{N} +o\left(\frac 1N\right)K_N(\gamma(u))\right)
\end{equation}
where  $K_N(\gamma)$ and $Q_{\gamma(u),N}^{(4)}(0)$ are defined in (\ref{eq:lclt1}) and (\ref{P4}) respectively. 
\end{Theorem}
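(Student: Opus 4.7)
The proof plan is to express $W_N(u)$ as an exponential tilt of the energy density $\cF_{N,\beta}$ and then apply the Local CLT of Theorem \ref{the:lclt} at the special point where the tilt balances the mean.

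\textbf{Step 1: Tilt identity.} From (\ref{fne}) the density of $h_N$ under $\nu_{\beta,N}$ satisfies
\begin{equation*}
\cF_{N,\beta}(u) \;=\; e^{-N[\beta u + f_N(\beta)]}\, W_N(u),
\end{equation*}
so for any $\beta$ in the admissible range
\begin{equation*}
W_N(u) \;=\; e^{N[\beta u + f_N(\beta)]}\, \cF_{N,\beta}(u).
\end{equation*}
I would then choose $\beta = \gamma(u)$ defined by $u=-f_N'(\gamma)$. By the definition (\ref{z*}) of the Legendre transform the supremum is attained at $\gamma$, so $\gamma u + f_N(\gamma) = -f_{N*}(u)$, and we arrive at
\begin{equation*}
W_N(u) \;=\; e^{-N f_{N*}(u)}\, \cF_{N,\gamma(u)}(u).
\end{equation*}

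\textbf{Step 2: Rewriting $\cF_{N,\gamma}(u)$ via the standardized variable.} Recall that $Y_N = \sqrt{N/f_N''(\gamma)}\,(h_N - u_N(\gamma))$ under $\nu_{\gamma,N}$, so the density of $h_N$ and the density $g_{\gamma,N}$ of $Y_N$ are related by
\begin{equation*}
\cF_{N,\gamma}(u) \;=\; \sqrt{\frac{N}{f_N''(\gamma)}}\; g_{\gamma,N}\!\left(\sqrt{\tfrac{N}{f_N''(\gamma)}}\,(u-u_N(\gamma))\right).
\end{equation*}
The crucial point is that the choice $\gamma=\gamma(u)$ is precisely the one for which $u_N(\gamma)=-f_N'(\gamma)=u$, so the argument of $g_{\gamma,N}$ vanishes and we only need the value $g_{\gamma,N}(0)$.

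\textbf{Step 3: Evaluating the Edgeworth expansion at zero.} I would apply Theorem \ref{the:lclt} at $x=0$. Since $H_3(x)=x^3-3x$, one has $H_3(0)=0$, so $Q^{(3)}_{\gamma,N}(0)=0$ and the $1/\sqrt N$ correction drops. With $\phi(0)=1/\sqrt{2\pi}$ the theorem yields
\begin{equation*}
g_{\gamma,N}(0) \;=\; \frac{1}{\sqrt{2\pi}}\left(1+\frac{Q^{(4)}_{\gamma,N}(0)}{N}\right)+o\!\left(\tfrac{1}{N}\right)K_N(\gamma).
\end{equation*}

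\textbf{Step 4: Identifying $f_{N*}''(u)$.} Differentiating $u=-f_N'(\gamma(u))$ gives $\gamma'(u)=-1/f_N''(\gamma)$, and since $f_{N*}'(u)=-\gamma(u)$ we obtain $f_{N*}''(u)=1/f_N''(\gamma)$. Therefore
\begin{equation*}
\sqrt{\frac{N}{f_N''(\gamma)}}\cdot\frac{1}{\sqrt{2\pi}} \;=\; \sqrt{\frac{N\,f_{N*}''(u)}{2\pi}},
\end{equation*}
and combining Steps 1--4 gives exactly the claimed asymptotic formula.

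\textbf{Main obstacle.} The conceptual steps (tilt, standardize, invoke the local CLT) are clean; the one delicate point is that the remainder $o(1/N)K_N(\gamma)$ in Theorem \ref{the:lclt} must be controlled \emph{uniformly} when evaluated at $\gamma=\gamma(u)$, which is why the assumption (\ref{eq:11}) is imposed on a full interval of $\beta$ and why $K_N(\beta)$ is required to be bounded on closed bounded subintervals. One also has to check that $\gamma(u)$ indeed lies in this interval, which is the content of the hypothesis that $f_N(\gamma(u))$ satisfies (\ref{eq:11}).
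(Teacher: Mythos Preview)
Your proof is correct and follows essentially the same route as the paper: tilt the canonical density at $\beta=\gamma(u)$ so that the Legendre relation $\gamma u+f_N(\gamma)=-f_{N*}(u)$ appears, rewrite $\cF_{N,\gamma}$ in terms of the standardized density $g_{\gamma,N}$, evaluate at the centering point $x=0$ where $Q^{(3)}_{\gamma,N}(0)=0$, and convert $1/f_N''(\gamma)$ to $f_{N*}''(u)$ by Legendre duality. The only cosmetic difference is that the paper re-derives the tilt identity via an auxiliary measure $\alpha_N$ on $\bbR^N$, whereas you invoke \eqref{fne} directly; the substance is identical.
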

\begin{proof}
Let $\omega=(\bf{p},\bf{q})\in \Omega^N$, 
${\bf  X(\omega)}=(X_1(\omega),\cdots,X_N(\omega))$,  and ${\bf
  x}=(x_1,\cdots,x_N)\in \bbR^N$. 
Consider the 
positive measure
$\alpha_N(d{\bf x})$ on $\mathbb R^N$ defined, for any integrable
function $F$ on $\mathbb R^N$, by  
\begin{equation}
\int_{\Omega^N}F({\bf X}(\omega))\; d\omega =
\int_{\bbR^N}F\left( {\bf x} \right)\, \alpha_N(d{\bf x}) 
\end{equation}
so that for any $\gamma$ we have
\begin{equation}
\int_{\Omega^N}F({\bf X}(\omega))\nu_{\gamma,N}(d\omega)=
\int_{\bbR^N}F\left( {\bf x} \right)\,
e^{-\gamma\sum_{i=1}^Nx_i-Nf_N(\gamma)}\alpha_N(d{\bf x}) 
\end{equation}
For any integrable function $G: \bbR \to \bbR$ we can write
\begin{equation}
  \label{eq:2}
 \int_{\bbR^N} G\left( \frac 1N
    \sum_{j=1}^N x_j \right)\, \alpha_N(d{\bf x}) = \int_{-\infty}^{+\infty} G(s) W_N(s) ds 
\end{equation}
Take $u \in \mathcal D_{f_{N*}}$, let $\gamma= \gamma_N(u)
\in \mathcal D_{f_N}$ as in the hypotheses of the theorem. For any
integrable function $G:\bbR \to \bbR$ we have  
\begin{equation*}
  \begin{split}
    \int_{\bbR^N} G\left(\frac 1{N\sqrt{f_N''(\gamma)}} \sum_{j=1}^N(x_j
      - u)\right) e^{-\gamma \sum_{j=1}^N x_j - Nf_N(\gamma)}
    \alpha_N(d{\bf x})\\
    = \int_\bbR G\left(\frac{s-u}{\sqrt{f_N''(\gamma)}}\right) 
      e^{-\gamma Ns- Nf_N(\gamma)} W_N(s) ds\\
      = e^{Nf_{N*}(u)} \sqrt{f_N''(\gamma)}\int_\bbR G\left(y\right) 
      e^{-\beta N \sqrt{f_N''(\gamma)} y} W_N(\sqrt{f''_N(\gamma)}y+u) dy
  \end{split}
\end{equation*}
In order to apply theorem \autoref{the:lclt} we identify
\begin{equation*}
  e^{Nf_{N*}(u)} \sqrt{f_N''(\gamma)} e^{-\gamma N \sqrt{f_N''(\beta)} y} 
  W_N(\sqrt{f''_N(\gamma)}y+u)  = \sqrt{N} g_{\gamma,N} (\sqrt N y)
\end{equation*}
so that for $y=0$
\begin{equation}\label{eq:exact}
  e^{Nf_{N*}(u)} \sqrt{f_N''(\gamma)} W_N(u)  = \sqrt{N} g_{\gamma,N} (0)
  = \sqrt{\frac{ N}{2\pi}}\left( 1+ \frac{Q_{\gamma(u),N}^{(4)}(0)}{N}+o\left(\frac{1}{N}\right)\,K_N(\gamma(u)) \right).
\end{equation}
and using $f_{N}''(\gamma)=1/f_{N*}''(u)$
\begin{equation*}
   W_N(u)  = e^{-Nf_{N*}(u)} 
   \sqrt{\frac{N\,f_{N*}''(u)}{2\pi }} \left( 1+ \frac{Q_{\gamma(u),N}^{(4)}(0)}{N}+o\left(\frac{1}{N}\right)\,K_N(\gamma(u))\right)
\end{equation*}
\end{proof}

We can resume the above result more explicitly, 
by using the bounds and the explicit form of the polynomial 
$Q_{\gamma,N}^{(4)}(0) = \frac{\gamma}{4} f^{''''}_N(\gamma)$,
\begin{equation}
  \label{eq:12}
   \left| W_N(u) e^{Nf_{N*}(u)}  \sqrt{\frac{2\pi}{N\,f_{N*}''(u)}}\, - 1 \right| \le 
   \frac{\beta_N(u) C_{\gamma_N(u)}}{4N} + o\left(\frac{1}{N}\right)\,K(\gamma_N(u)) \, .
\end{equation}
where $\gamma(u)=-f_{N*}'(u)$.

\medskip

Theorem \ref{teo1} allows to
write the probability density function in (\ref{fne}) as 
\begin{equation}\label{funzionale}
\cF_{N,\beta}(u)= e^{-NI_{N,\beta}(u)}\,  
   \sqrt{\frac{N}{2\pi}\,f_{N*}''(u)} \left( 1+\frac{Q_{\gamma(u),N}^{(4)}(0)}{N} +o\left(\frac 1N\right)K_N(\gamma(u))\right)
   \end{equation}
where $\gamma(u)=-f_{N*}'(u)$ and 
$$
I_{N,\beta}(u):=\beta u + f_N(\beta) + f_{N*}(u) = \beta (u
-u_{N}(\beta))  -f_{N*}(u_N(\beta)) + f_{N*}(u) .
$$ 
As $\beta = - f'_{N*}( u_N(\beta))$, we can thus rewrite 
\begin{equation}
  \label{eq:17}
  I_{N,\beta}(u):= f_{N*}(u)  - f_{N*}(u_N(\beta)) - f'_{N*}(u_N(\beta))  (u -u_N(\beta)) 
\end{equation}

The
functional $I_{N,\beta}(u)$ is convex, derivable and has a minimum in
$u_{\beta,N}$ where $u_{\beta,N}:=\langle
h_N\rangle_{\beta,N}$, 
$$
I_{N,\beta}'(u_{\beta,N})=0,
$$
{and}
$$ 
I''_{N,\beta}(u_{\beta,N}) = f''_{N*}(u_{\beta,N})=1/f_N''(\beta).
$$  
Equation (\ref{funzionale}) says that the sequence $h_N$
 satisfies a local large deviation principle, also called
 {\it Large Deviation Principle in the Strong Form}, see \cite{DS} where the principle is defined for discrete random variables with assumptions that are generally stronger than (\ref{eq:11}).
   
\section{Micro-Canonical distribution and equivalence of ensembles.}
\label{sec:micro-canon-distr}

We here define the equivalence of ensembles.
Given an observable $A$ on $\Omega_N$, we define the micro canonical 
average $\langle A|u\rangle_{N}$ as a conditional expectation by the classic formula:
\begin{equation}
  \label{eq:13}
  \langle A F(h_N) \rangle_{N,\beta} =  \langle \langle A| h_N\rangle
  F(h_N) \rangle_{N,\beta} =\int F(u) \langle A|u\rangle_{N}
  \cF_{N,\beta}(u) du,
\end{equation}
for any measurable function $F(u)$ on $\mathbb R$.
It is an easy exercice to see that these conditional expectations do not depend on
$\beta$. Of course \eqref{eq:13} defines the conditional expectation
only a.s. with respect to the Lebesgue measure.
But under the regularity assumptions on the interaction potential $V$,
the microcanonical surface
\begin{equation}
  \label{eq:14}
  \Sigma_N(u) = \left\{ ({\bf p}, {\bf q}) \in \Omega_N: h_N = u\right\}
\end{equation}
is regular enough such that co-area formulas (cf. \cite{EG}) can be applied and give
the existence of a regular conditional distribution on  $\Sigma_N(u)$,
defined for every value of $u$. 
We will assume in the following various conditions on the function 
$u \mapsto <A|u>_N$, 
that have to be verified in the various
applications.

By equivalence of ensembles we mean here the convergence of
\begin{equation}
  \label{eq:15}
   \langle A\rangle_{\beta,N} -   \langle A| u_{N}(\beta)\rangle_{N}
   \mathop{\longrightarrow}_{N\to\infty} 0,
\end{equation}
for a certain class of functions. We are in particular interested in the rate of convergence in (\ref{eq:15}).

For the simple case when $A$ is a bounded function
such that $<A|u>_N$ is continuous around $u=u_N(\beta)$ uniformly in
$N$,
 all we need is:
\begin{equation}
  \label{eq:16}
  \begin{split}
   & f_N(\beta) < +\infty,\qquad  \forall \beta>0\\
   & f_N(\beta)\ \text{twice differentiable in} \ \beta\\
    & \inf_N f''_N(\beta) \ge \sigma^2_- >0 .
  \end{split}
\end{equation}

By the uniform continuity of $<A|u>_N$, for any $\epsilon> 0$, there exists
$\delta_{\epsilon}> 0$ such that $ |\langle A|u\rangle_N - \langle A|u_N(\beta)\rangle_N| <
\epsilon$ if $|u-u_N(\beta)| <\delta_{\epsilon}$. Then 
\begin{equation*}
  \begin{split}
     \left| \langle A\rangle_{\beta,N} -   \langle A|
     u_{N}(\beta)\rangle_{N}\right| \le 2 \|A\|_\infty 
   \int_{|u- u_{N}(\beta)|\ge \delta_{\epsilon}} \cF_{N,\beta}(u) du +\epsilon 
  \end{split}
\end{equation*}
Let us split the large deviation term:
\begin{equation*}
   \int_{|u- u_{N}(\beta)|\ge \delta_{\epsilon}} \cF_{N,\beta}(u) du =
   \int_{u> u_{N}(\beta)+ \delta_{\epsilon}} \cF_{N,\beta}(u) du  +
   \int_{u < u_{N}(\beta)- \delta_{\epsilon}} \cF_{N,\beta}(u) du.
\end{equation*}
Let us estimate the first term of the RHS (the second term is
analogous). To shorten notation, denote $\bar u = u_{N}(\beta)+
\delta_{\epsilon}$. By exponential Chebichef inequality, for any $\lambda>0$:
\begin{equation*}
  \int_{u> \bar u} \cF_{N,\beta}(u) du \le
  e^{-N\left[\lambda\bar u  -
      f_N(\beta-\lambda) + f_N(\beta)\right]}.
\end{equation*}
Notice that
\begin{equation}
  \label{eq:18}
  \begin{split}
    I_{N,\beta} (\bar u) &= \sup_{\lambda>0} \left(\lambda \bar u -
        f_N(\beta-\lambda)\right) + f_N(\beta) \qquad u> u_N(\beta) \\
       I_{N,\beta} (\bar u) &= \sup_{\lambda<0} \left(\lambda \bar u -
        f_N(\beta-\lambda)\right) + f_N(\beta)\qquad u < u_N(\beta).
  \end{split}
\end{equation}
Consequently optimizing the estimate over $\lambda>0$ we have
\begin{equation*}
  \begin{split}
    \int_{u> \bar u} \cF_{N,\beta}(u) du &\le 
    e^{-N I_{N,\beta}(\bar u)}
  \end{split}
\end{equation*}
and similar estimate for the term of the deviation on the other side.

Our conditions on $f''_N(\beta)$ implies the strong convexity of
$I_{N,\beta}(\bar u)$ in an interval around $u_N(\beta)$, uniform in $N$. This means exists $a>0$
such that  
$$
I_{N,\beta} (u_N(\beta) \pm \delta) \ge {a\delta^2} 
$$
It follows that
\begin{equation}\label{eq:ldub}
    \int_{|u- u_{N}(\beta)|\ge \delta_{\epsilon}} \cF_{N,\beta}(u) du
    \le 2 e^{-N a\delta_\epsilon^2}
\end{equation}
that converge exponentially to $0$ for any $\epsilon >0$. Taking
$\epsilon \to 0$ concludes the argument.


In the next section we will analyse closer this convergence, allowing
observables $A$ that are extensive.

\section{Lebowitz-Percus-Verlet formulas for fluctuations}
\label{sec:lebow-perc-verl}

In this section $A$ is a function on $\Omega_N$, eventually extensive, such that satisfies the
following:
\begin{itemize}
\item[(i)] $ \|A\|_{2,\beta,N}$ is finite, where 
$ \|A\|_{2,\beta,N}$ is the $L^2$ norm of $A$ with respect to the canonical measure $\nu_{\beta, N}$ defined in (\ref{canmeas})
\item[(ii)] For $j=0,1,2$ there exists $C_\beta>0$ such that
$$
\left| \frac {d^j}{du^j} \langle A| u \rangle_{N} \Big|_{u_{N}(\beta)} \right| \le\, C_\beta\, N^{j/2}  \|A\|_{2,\beta,N}, 
$$
\item[(iii)] Let $\delta_N := b\, \sqrt{\log N/N}$ for some $b>0$, then there exists $C_\beta>0$ such that
 \begin{equation}\label{deri3}
  B_{N,\beta} := \sup_{|u-u_N(\beta)|\le\delta_N}\left |\frac {d^3}{du^{3}} 
 \langle A | u\rangle_N \right| \le C_\beta\, \frac{N^{1/2}}{\log N}\, \| A\|_{2,\beta,N}.
 \end{equation} 
\end{itemize}

\begin{Theorem}\label{lpv-th}
  Under conditions (i)-(iii) above the following formula holds
  \begin{equation}
    \begin{split}\label{eqfin-th}
      \langle A|u_{N}(\beta)\rangle_{N} =\langle A
      \rangle_{\beta,N}-\frac{1}{2N}\frac{d}{d\beta}\left[\frac{1}{f_N''(\beta)}\frac{d}{d\beta}\langle
        A \rangle_{\beta,N}\right]+o\left(\frac{1 }{N}\right) \|A\|_{2,\beta,N}.
    \end{split}
  \end{equation}
\end{Theorem}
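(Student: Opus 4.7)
The plan is to Taylor-expand the conditional expectation $g(u):=\langle A|u\rangle_N$ around the canonical mean $u^*:=u_N(\beta)$, integrate against the energy density $\cF_{N,\beta}(u)$ from Section \ref{sec:local-large-devi}, and re-express the resulting second-order coefficient $g''(u^*)$ through $\beta$-derivatives of $\langle A\rangle_{\beta,N}$. Starting from \eqref{eq:13} with $F\equiv 1$,
\[
\langle A\rangle_{\beta,N}=\int g(u)\,\cF_{N,\beta}(u)\,du,
\]
I would split the integral at $|u-u^*|=\delta_N$. On the tail $|u-u^*|>\delta_N$, the strong-convexity estimate \eqref{eq:ldub} gives $\int_{|u-u^*|>\delta_N}\cF_{N,\beta}(u)\,du\le 2 N^{-ab^2}$; combined with hypothesis (i), conditional Jensen, and Cauchy--Schwarz, this makes the outer contribution $o(1/N)\|A\|_{2,\beta,N}$ for $b$ chosen large enough. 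On the bulk $|u-u^*|\le\delta_N$, hypothesis (iii) controls the Taylor remainder as $|R(u)|\le\tfrac16 B_{N,\beta}|u-u^*|^3$, whose integral against $\cF_{N,\beta}$ is $O(B_{N,\beta}\delta_N/N)=O(\|A\|_{2,\beta,N}/(N\sqrt{\log N}))=o(1/N)\|A\|_{2,\beta,N}$. The exact canonical moments $\int(u-u^*)\cF_{N,\beta}du=0$ and $\int(u-u^*)^2\cF_{N,\beta}du=f_N''(\beta)/N$ from \eqref{eq:5} then produce
\[
\langle A\rangle_{\beta,N}=g(u^*)+\frac{f_N''(\beta)}{2N}\,g''(u^*)+o\!\left(\tfrac{1}{N}\right)\|A\|_{2,\beta,N}. \qquad (\ast)
\]

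To replace $\tfrac{f_N''(\beta)}{N}g''(u^*)$ by the double $\beta$-derivative appearing in the theorem, I would differentiate the canonical density under the integral sign. Since $\tfrac{d}{d\beta}u_N(\beta)=-f_N''(\beta)$ by \eqref{eq:5}, one has the exact identities
\[
\frac{d}{d\beta}\langle A\rangle_{\beta,N}=-N\!\int g(u)(u-u^*)\,\cF_{N,\beta}(u)\,du,
\]
\[
\frac{d^2}{d\beta^2}\langle A\rangle_{\beta,N}=-Nf_N''(\beta)\langle A\rangle_{\beta,N}+N^2\!\int g(u)(u-u^*)^2\,\cF_{N,\beta}(u)\,du.
\]
Applying the same split-Taylor procedure to these two integrals, together with the additional exact moments $\int(u-u^*)^3\cF du=-f_N'''(\beta)/N^2$ and $\int(u-u^*)^4\cF du=(f_N''''(\beta)+3Nf_N''(\beta)^2)/N^3$ from \eqref{eq:5}, yields expansions of $\tfrac{d}{d\beta}\langle A\rangle_{\beta,N}$ and $\tfrac{d^2}{d\beta^2}\langle A\rangle_{\beta,N}$ in terms of $g(u^*)$, $g'(u^*)$, $g''(u^*)$. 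Using $(\ast)$ to eliminate $g(u^*)-\langle A\rangle_{\beta,N}$ and computing $\tfrac{d}{d\beta}[\tfrac{1}{f_N''(\beta)}\tfrac{d}{d\beta}\langle A\rangle_{\beta,N}]$ via the quotient rule, the $g'(u^*)f_N'''(\beta)$-terms cancel and one is left with
\[
f_N''(\beta)\,g''(u^*)=\frac{d}{d\beta}\!\left[\frac{1}{f_N''(\beta)}\frac{d}{d\beta}\langle A\rangle_{\beta,N}\right]+o(1)\|A\|_{2,\beta,N}.
\]
Dividing by $2N$ and substituting in $(\ast)$ produces the claimed formula.

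The main obstacle is the error bookkeeping in the second step. The cubic Taylor remainders, integrated against $\cF_{N,\beta}$ and then multiplied by the prefactors $N$ and $N^2$ arising from $\tfrac{d}{d\beta}$ and $\tfrac{d^2}{d\beta^2}$, must remain $o(1)\|A\|_{2,\beta,N}$ so that division by $2N$ at the end keeps them inside the $o(1/N)\|A\|_{2,\beta,N}$ target; the $\log N$ factor in the denominator of $B_{N,\beta}$ in hypothesis (iii) is calibrated precisely to supply this mild gain, since without it the procedure only delivers the weaker $O(1/N)\|A\|_{2,\beta,N}$ control. One must also check that the $f_N''''$-contribution coming from $\int(u-u^*)^4\cF du=3Nf_N''^2/N^3+f_N''''/N^3$ fits inside the same error budget after pairing with the bound $|g''(u^*)|\le C_\beta N\|A\|_{2,\beta,N}$ from (ii), and that the tail corrections to the moments are negligible against the exponential large-deviation decay; both are routine but tedious.
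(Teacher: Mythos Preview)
Your proposal follows essentially the same architecture as the paper's proof: split at $\delta_N=b\sqrt{\log N/N}$, control the tail by Cauchy--Schwarz and the large-deviation bound \eqref{eq:ldub}, Taylor-expand $g(u)=\langle A|u\rangle_N$ to second order on the bulk, and then convert $f_N''(\beta)g''(u^*)$ into the double $\beta$-derivative of a canonical expectation. The two proofs differ only in packaging. For the bulk remainder you use the direct bound $|R(u)|\le \tfrac16 B_{N,\beta}|u-u^*|^3$ together with the exact second moment $f_N''/N$, which is more elementary than the paper's route: the paper instead applies Jensen with an exponential tilt $\alpha_N=\sqrt{\log N}$ and invokes the local large-deviation formula \eqref{funzionale} to bound $\int e^{\alpha_N N G_{N,\delta_N}}\cF\,du$. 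Your argument bypasses \eqref{funzionale} entirely for this step and gets the same $o(1/N)$ with less machinery. For the second step, the paper first uses the exact chain-rule identity $f_N''g''(u^*)=\tfrac{d}{d\beta}[\tfrac{1}{f_N''}\tfrac{d}{d\beta}\langle A|u_N(\beta)\rangle_N]$ (their (5.9)--(5.10)) and then proves a separate Lemma (their Lemma~\ref{iteration}) to swap $\langle A|u_N(\beta)\rangle_N$ for $\langle A\rangle_{\beta,N}$ inside the derivative; your direct expansion of $\tfrac{d}{d\beta}\langle A\rangle_{\beta,N}$ and $\tfrac{d^2}{d\beta^2}\langle A\rangle_{\beta,N}$ via the higher moments is exactly the content of that lemma's proof, just not factored out. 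The obstacle you flag---that the $f_N''''$-contribution paired with $|g''(u^*)|\le C_\beta N\|A\|_2$ lands at $O(1)\|A\|_2$ rather than $o(1)\|A\|_2$ before the final division by $2N$---is real and is precisely the bookkeeping that the paper absorbs into its Lemma~\ref{iteration}; neither proof offers more than the stated hypotheses to close this, so you are in the same position as the paper here.
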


\begin{proof}
Since expression \eqref{eqfin-th} is homogeneous in $A$, we can divide by $\|A\|_{2,\beta,N}$ 
and consider functions $A$  such that  $\|A\|_{2,\beta,N}=1$.
 We write the difference between the canonical and micro canonical
  expectations as
  \begin{equation}
    \label{eq:1}
    \langle A\rangle_{\beta,N} -   \langle A| u_{N}(\beta)\rangle_{N} = 
    \int \, \cF_{N,\beta}(u)
    \left[ \langle A| u\rangle_{N} -  \langle
      A| u_{\beta,N}\rangle_{N}\right] du 
  \end{equation}
Denote
\begin{equation*}
  G_N(u) =  \langle A| u \rangle_{N} - \langle A|u_{N}(\beta)\rangle_{N} -
    \frac d{du} \langle A| u \rangle_{N} \Big|_{u_{N}(\beta)} (u -
   u_{N}(\beta)) - \frac 12 \frac {d^2}{du^2} \langle A| u \rangle_{N}
    \Big|_{u_{N}(\beta)} (u - u_{N}(\beta))^2
\end{equation*}
Obviously $G_N(u_{N}(\beta)) =  G'_N(u_{N}(\beta)) =G''_N(u_{N}(\beta)) =0$.
We want to prove that 
\begin{equation}\label{Gu}
  \int \, \cF_{N,\beta}(u)\, G_N(u) \sim o\left(\frac{1 }{N}\right).
\end{equation}

Under conditions, $(i)-(iii)$ above, using \eqref{eq:11}, the properties of the norm and Schwarz inequality, we have that 
$ \|G_{N,\beta} \|_{\beta,N}^2 \le C_\beta'$.
For a given $\delta_N>0$, consider the bounded function
\begin{equation*}
   G_{N,\delta_N} (u) =  G_N(u) 1_{[|u-u_N(\beta)| < \delta_N]}
\end{equation*}
Then we can split the integral and,using Schwarz inequality, obtain
\begin{equation*}
\left| \int \, \cF_{N,\beta}(u)\, G_N(u)du\right|\le
 \sqrt{C_\beta'}\, \int \, \cF_{N,\beta}(u)\, 1_{[|u-u_N(\beta)| \ge \delta_N]} du\  +\   \left|\int \, \cF_{N,\beta}(u)G_{N,\delta_N}(u)\,du \right|
\end{equation*}
By \eqref{eq:ldub}, and choosing $\delta_N = b \sqrt{\log N/N}$,
 the first term on the RHS of the above is bounded by
\begin{equation}
  \label{eq:28}
 \int \, \cF_{N,\beta}(u)\, 1_{[|u-u_N(\beta)| \ge \delta_N]}\, du \le 2 N^{-a b^2}
\end{equation}
and we take $b$ such that $ab^2 >1$.

For the second term, by Jensen's inequality and (\ref{funzionale}),
for any $\alpha>0$ we have
\begin{equation*}
  \begin{split}
   & \left|\int \, \cF_{N,\beta}(u)G_{N,\delta_N}(u)\,du \right|\le
    \frac 1{\alpha N} \log \int e^{\alpha N G_{N,\delta_N}(u)}
    \cF_{N,\beta}(u) du\\
    &= \frac 1{\alpha N} \log \left[ \int_{[|u-u_N(\beta)| < \delta_N]} 
    e^{- N\left(I_{N,\beta}(u) - \alpha G_{N,\delta}(u)\right)}
    \sqrt{\frac{N}{2\pi} f''_{N*} (u)} \left(1 + O(N^{-1})\right)\  du \ + 
   2 N^{-ab^2} \right].
  \end{split}
\end{equation*}
Since, by Taylor formula and condition $(iii)$ above, $|G_{N,\delta_N}(u)| \le B_{N,\beta}\, |u- u_N(\beta)|^3$, and  
$I_{N,\beta}(u) \ge a (u- u_N(\beta))^2$, with $a$ 
independent of $N$, we have that 
\begin{equation}
  \label{eq:29}
\begin{split}
  I_{N,\beta}(u) - \alpha G_{N,\delta}(u)& \ge (u- u_N(\beta))^2 
  \left(a - \alpha B_{N,\beta}\, |u- u_N(\beta)|\right)\\
  & \ge 
(u- u_N(\beta))^2  \left(a - \alpha B_{N,\beta} \delta_N \right)
\end{split}
\end{equation}
Choose $\alpha$ as a sequence $\alpha_N \to \infty$, for $n\to\infty$, and  such that 
$\alpha_N B_N \delta_N < a$, we have 
$$
I_{N,\beta}(u) - \alpha_N G_{N,\delta}(u) \ge 0, \qquad \text{if}\quad \ |u-u_N(\beta)| < \delta_N.
$$

Then we have:
\begin{equation*}
  \begin{split}
&N  \left|\int \, \cF_{N,\beta}(u)G_{N,\delta_N}(u)\,du \right|\\ 
&\le  \frac 1{\alpha_N} \log \left[ 2\delta_N
    \sqrt{\frac{N}{2\pi} } \sup_{|u-u_N(\beta)| <\delta_N} \sqrt{ f''_{N*} (u)} \left(1 + O(N^{-1})\right) \ + 
   2 N^{-ab^2} \right] \\
 &= \frac 1{\alpha_N} \log \left[ b\sqrt{\log N}
    \sqrt{\frac{2}{\pi} } \sup_{|u-u_N(\beta)|  <\delta_N} \sqrt{ f''_{N*} (u)} \left(1 + O(N^{-1})\right) \ + 
   2 N^{-ab^2}\right] .
  \end{split}
\end{equation*}
If $\alpha_N$ grows faster than $\log\log N$ the last term above 
tends to $0$ as $N\to\infty$ . If we choose 
$\alpha_N = \sqrt{\log N}$, we also satisfy that $\alpha_N B_N \delta_N < a$.

We can thus rewrite equation (\ref{eq:1}) as
\begin{equation}\label{f1}
  \begin{split}
    \langle A \rangle_{\beta,N} = \langle A|u_{N}(\beta)\rangle_{N} + 
     \frac{f''_N(\beta)}{2N} \frac {d^2}{du^2} \langle A| u \rangle_{N}
    \Big|_{u=u_{N}(\beta)} + o\left(\frac{1 }{N}\right)\, \|A\|_{2,\beta,N}
  \end{split}
\end{equation}
Note that for any differentiable function $g(u)$
\begin{align}\label{f21}
\frac{d}{du}\, g(u)\Big|_{u=u_N(\beta)}& 
=-\frac{1}{f_N''(\beta)}\frac{d}{d\beta}\, g(u_N(\beta))\\\label{f22}
f_N''(\beta) \frac {d^2}{du^2}\, 
g (u)\Big|_{u=u_N(\beta)}&=  \frac d{d\beta} \left[ \frac 1{f''_N(\beta)} \frac d{d\beta}\,
    g(u_N(\beta)) \right]  
\end{align}
By (\ref{f22}) we can write (\ref{f1}) as
\begin{equation}\label{rico}
\langle A|u_{N}(\beta)\rangle_{N} =\langle A
\rangle_{\beta,N}-\frac{1}{2N}\frac{d}{d\beta}\left[\frac{1}{f_N''(\beta)}\frac{d}{d\beta}\langle
  A|u_{N}(\beta)\rangle_{N}\right]+ o\left(\frac{1
  }{N}\right)\, \|A\|_{2,\beta,N}.
\end{equation}

By lemma \ref{iteration} below:
\begin{equation*}
  \frac{d}{d\beta}\left[\frac{1}{f_N''(\beta)}\frac{d}{d\beta}
    \Big(\langle A \rangle_{\beta,N} - \langle A|u_N(\beta)
    \rangle_N\Big)\right]  \sim \,o\left(\frac{1}{N}\right)\, \|A\|_{2,\beta,N} 
\end{equation*}
and \eqref{eqfin-th} follows.
\end{proof}

\begin{Lemma}\label{iteration} Under the conditions of Theorem \ref{lpv-th} the following relations hold  \begin{equation}
    \label{eq:19}
    \begin{split}
  \frac{d}{d\beta} \Big(\langle A \rangle_{\beta,N} - \langle A|u_N(\beta)
    \rangle_N\Big)  =\frac {f'''_{N}(\beta)}{2N} \frac {d^2}{du^2} \langle A|u
    \rangle_{N}\big|_{u=u_N(\beta)}+ \,o\left(\frac{1}{N}\right) \|A\|_{2,\beta,N} \\
  \frac{d^2}{d\beta^2}
    \Big(\langle A \rangle_{\beta,N} - \langle A|u_N(\beta)
    \rangle_N\Big) =\frac { f''''_N(\beta)}{2N}\frac {d^2}{du^2} \langle A|u
    \rangle_{N}\big|_{u=u_N(\beta)} + \, o\left(\frac{1}{N}\right) \|A\|_{2,\beta,N} \\
 \end{split}
  \end{equation}
\end{Lemma}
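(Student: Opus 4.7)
The plan is to differentiate the integral representations of $\langle A\rangle_{\beta,N}$ and $\langle A|u_N(\beta)\rangle_N$ in parallel and read off the leading terms after cancellation, following the strategy of the proof of Theorem~\ref{lpv-th}. Using $\frac{d}{d\beta}\cF_{N,\beta}(u)=-N(u-u_N(\beta))\cF_{N,\beta}(u)$, one obtains the exact identities
\begin{align*}
\frac{d}{d\beta}\langle A\rangle_{\beta,N}&=-N\int(u-u_N)\langle A|u\rangle_N\,\cF_{N,\beta}(u)\,du,\\
\frac{d^2}{d\beta^2}\langle A\rangle_{\beta,N}&=N^2\int(u-u_N)^2\langle A|u\rangle_N\,\cF_{N,\beta}(u)\,du-Nf_N''(\beta)\langle A\rangle_{\beta,N},
\end{align*}
while the chain rule together with $\frac{du_N}{d\beta}=-f_N''(\beta)$ yields
\begin{align*}
\frac{d}{d\beta}\langle A|u_N(\beta)\rangle_N&=-f_N''\,\tfrac{d}{du}\langle A|u\rangle_N\big|_{u_N},\\
\frac{d^2}{d\beta^2}\langle A|u_N(\beta)\rangle_N&=-f_N'''\,\tfrac{d}{du}\langle A|u\rangle_N\big|_{u_N}+(f_N'')^2\,\tfrac{d^2}{du^2}\langle A|u\rangle_N\big|_{u_N}.
\end{align*}

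Next I would insert the Taylor expansion $\langle A|u\rangle_N=T_2(u)+G_N(u)$ around $u_N(\beta)$, with $|G_N(u)|\le B_{N,\beta}|u-u_N|^3/6$ on $\{|u-u_N|\le\delta_N\}$ by condition (iii), and integrate the polynomial part using the exact moment identities implied by \eqref{eq:5}, namely $\int(u-u_N)^k\cF_{N,\beta}\,du$ equals $0$, $f_N''/N$, $-f_N'''/N^2$, and $3(f_N'')^2/N^2+f_N''''/N^3$ for $k=1,2,3,4$ respectively. For the first-derivative identity, the Taylor-linear contribution to $-N\int(u-u_N)T_2\cF\,du$ gives $-f_N''\tfrac{d}{du}\langle A|u\rangle_N|_{u_N}$, which cancels $\frac{d}{d\beta}\langle A|u_N(\beta)\rangle_N$, and the Taylor-quadratic contribution produces precisely $\tfrac{f_N'''}{2N}\tfrac{d^2}{du^2}\langle A|u\rangle_N|_{u_N}$. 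For the second-derivative identity, one substitutes formula \eqref{f1} of Theorem~\ref{lpv-th} into $-Nf_N''(\beta)\langle A\rangle_{\beta,N}$; the $k=0,1$ contributions of the Taylor expansion cancel against the corresponding chain-rule terms in $\frac{d^2}{d\beta^2}\langle A|u_N(\beta)\rangle_N$, and the $k=2$ contributions combine to give the claimed $\tfrac{f_N''''}{2N}\tfrac{d^2}{du^2}\langle A|u\rangle_N|_{u_N}$.

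The main obstacle, the one that requires actual work, is showing that the remainder integrals $N\int(u-u_N)G_N\,\cF\,du$ and $N^2\int(u-u_N)^2G_N\,\cF\,du$ are both $o(1/N)\|A\|_{2,\beta,N}$. These are estimated by precisely the Jensen-exponential argument used in the proof of Theorem~\ref{lpv-th}. I would split at $\delta_N=b\sqrt{\log N/N}$; the tail contribution is bounded by \eqref{eq:ldub} and Cauchy-Schwarz and is $O(N^{-ab^2})\|A\|_{2,\beta,N}$, negligible once $b$ is chosen large. On the good region I would apply Jensen's inequality with an exponent $\alpha_N N$ where $\alpha_N\to\infty$ is constrained by $\alpha_N B_{N,\beta}\delta_N<6a$, and use the strong convexity $I_{N,\beta}(u)\ge a(u-u_N)^2$ to keep $NI_{N,\beta}-\alpha_N N G_N\ge 0$ on the good region, which yields a bound of order $(\log\log N)/(\alpha_N N)$. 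The additional factors $(u-u_N)^k$ in the two remainder integrals raise the order of vanishing at $u_N(\beta)$ from $3$ to $3+k$, so a correspondingly larger admissible $\alpha_N$ is available and the $o(1/N)\|A\|_{2,\beta,N}$ bound follows.
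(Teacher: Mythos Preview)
Your approach is exactly the paper's: differentiate $\langle A\rangle_{\beta,N}=\int\langle A|u\rangle_N\,\cF_{N,\beta}(u)\,du$ via $\partial_\beta\cF_{N,\beta}=-N(u-u_N)\cF_{N,\beta}$, Taylor-expand $\langle A|u\rangle_N$ to second order around $u_N(\beta)$, evaluate the polynomial part with the moment identities \eqref{eq:5}, and absorb the $G_N$-remainders into the error. The paper is in fact \emph{less} detailed on the weighted remainders $N\int(u-u_N)G_N\,\cF$ and $N^2\int(u-u_N)^2G_N\,\cF$ than you are: it simply writes ``using \eqref{Gu}'' without further comment, while you at least note that the extra $(u-u_N)^k$ factors raise the order of vanishing and permit a larger $\alpha_N$.

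One caveat on your last sentence: the larger admissible $\alpha_N$ does not actually reach the claimed $o(1/N)\|A\|_{2,\beta,N}$. For $k=1$ the constraint $\alpha_N B_{N,\beta}\,\delta_N^{2}<a$ (with $B_{N,\beta}\lesssim\sqrt N/\log N$ and $\delta_N^2\sim\log N/N$) caps $\alpha_N$ at order $\sqrt N$, and the Jensen bound $\tfrac{1}{\alpha_N N}\log(\cdot)$ then yields $N\int(u-u_N)G_N\,\cF=O((\log\log N)/\sqrt N)$, which is $o(1)$ but not $o(1/N)$; the $k=2$ case fares no better. The paper's bare citation of \eqref{Gu} does not resolve this either, so at the level of the remainder estimate the two arguments are equally incomplete; a direct moment bound via $\int(u-u_N)^{3+k}\cF_{N,\beta}$ gives the same orders.
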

\begin{proof} 
Note that by (\ref{eq:1})
  \begin{equation}
    \label{eq:20}
    \begin{split}
      &\frac{d}{d\beta} \Big(\langle A \rangle_{\beta,N} - \langle
      A|u_N(\beta) \rangle_N\Big) \\
      &= -N \int \left (\langle A|u \rangle_{N} - \langle
      A|u_N(\beta) \rangle_N \right) \left( u - u_N(\beta) \right)
    \cF_{\beta,N}(u) du - \frac{d}{d\beta} \langle A|u_N(\beta) \rangle_N
    \end{split}
  \end{equation}
and, using  the definition of $G_N(u)$ above and (\ref{Gu}),
that this is equal to
\begin{equation*}
  \begin{split}
    &= -f''_N(\beta) \frac d{du} \langle A|u \rangle_{N}\big|_{u=
      u_N(\beta)} - \frac{d}{d\beta} \langle A|u_N(\beta) \rangle_N \\&+
    \frac {f'''_{N}(\beta)}{2N} \frac {d^2}{du^2} \langle A|u
    \rangle_{N}\big|_{u=u_N(\beta)} +\, o\left(\frac{1}{N}\right) \|A\|_{2,\beta,N}\\
    &=\frac {f'''_{N}(\beta)}{2N} \frac {d^2}{du^2} \langle A|u
    \rangle_{N}\big|_{u=u_N(\beta)} + \, o\left(\frac{1}{N}\right) \|A\|_{2,\beta,N}.
  \end{split}
\end{equation*}
This proves the first of \eqref{eq:19}. For the second one:
\begin{equation}
  \label{eq:21}
  \begin{split}
     & \frac{d^2}{d\beta^2}
    \Big(\langle A \rangle_{\beta,N} - \langle A|u_N(\beta)
    \rangle_N\Big) \\
    & = N^2 \int \left (\langle A|u \rangle_{N} - \langle
      A|u_N(\beta) \rangle_N \right) \left( u - u_N(\beta) \right)^2
    \cF_{\beta,N}(u) du \\
    &\quad - N  f''_N(\beta) \Big(\langle A \rangle_{\beta,N} - \langle A|u_N(\beta)
    \rangle_N\Big)
    - \frac{d^2}{d\beta^2} \langle A|u_N(\beta) \rangle_N
  \end{split}
\end{equation}
and again  using  the definition of $G_N(u)$ above and (\ref{Gu}), we have that this is equal to
\begin{equation*}
  \begin{split}
   & N^2 \frac d{du} \langle A|u \rangle_{N}\big|_{u=u_N(\beta)} 
    \int \left( u - u_N(\beta) \right)^3 \cF_{\beta,N}(u) du\\
   & + \frac{N^2}{2} \frac {d^2}{du^2} \langle A|u \rangle_{N}\big|_{u=u_N(\beta)} 
    \int \left( u - u_N(\beta) \right)^4 \cF_{\beta,N}(u) du +
   \\
    &- \frac{1}{2} \frac {d^2}{du^2} \langle A|u
    \rangle_{N}\big|_{u=u_N(\beta)} (f''_N(\beta))^2 -
    \frac{d^2}{d\beta^2} \langle A|u_N(\beta) \rangle_N +o\left(\frac{1}{N}\right) \|A\|_{2,\beta,N}\\
    & =  -f'''_N(\beta) \frac d{du} \langle A|u \rangle_{N}\big|_{u=u_N(\beta)}
   + \left(\frac 1{2N} f''''_N(\beta) + \frac 32 (f''_N(\beta))^2
   \right)  \frac {d^2}{du^2} \langle A|u
    \rangle_{N}\big|_{u=u_N(\beta)} \\
    &- \frac{1}{2}  (f''_N(\beta))^2 \frac {d^2}{du^2} \langle A|u
    \rangle_{N}\big|_{u=u_N(\beta)} -
    \frac{d^2}{d\beta^2} \langle A|u_N(\beta) \rangle_N+o\left(\frac{1}{N}\right) \|A\|_{2,\beta,N}\\
   & =  -f'''_N(\beta) \frac d{du} \langle A|u \rangle_{N}\big|_{u=u_N(\beta)}
   + \frac 1{2N} f''''_N(\beta) \frac {d^2}{du^2} \langle A|u
    \rangle_{N}\big|_{u=u_N(\beta)}\\
    & + (f''_N(\beta))^2 \frac {d^2}{du^2} \langle A|u
    \rangle_{N}\big|_{u=u_N(\beta)} -
    \frac{d^2}{d\beta^2} \langle A|u_N(\beta) \rangle_N+o\left(\frac{1}{N}\right) \|A\|_{2,\beta,N}\\
    &=  \frac{ f'''_N(\beta)}{f''_N(\beta)} \frac d{d\beta} \langle
    A|u_N(\beta) \rangle_{N} + \frac 1{2N} f''''_N(\beta) \frac {d^2}{du^2} \langle A|u
    \rangle_{N}\big|_{u=u_N(\beta)}\\
    & + f''_N(\beta) \frac{d}{d\beta} \frac 1{ f''_N(\beta)}  \frac d{d\beta} \langle
    A|u_N(\beta) \rangle_{N}  -
    \frac{d^2}{d\beta^2} \langle A|u_N(\beta) \rangle_N +o\left(\frac{1}{N}\right) \|A\|_{2,\beta,N}\\
    &=  \frac 1{2N} f''''_N(\beta) \frac {d^2}{du^2} \langle A|u
    \rangle_{N}\big|_{u=u_N(\beta)}+o\left(\frac{1}{N}\right) \|A\|_{2,\beta,N}.
  \end{split}
\end{equation*}
This proves the second of \eqref{eq:19}.

\end{proof}

Let $A$ and $B$ two functions such that they and their product satisfies the assumptions  of Theorem  \eqref{lpv-th}.
Applying formula (\ref{eqfin-th}) to $AB$ we obtain
\begin{equation*}
   \langle AB| u_{N}(\beta) \rangle_{N} = \langle AB \rangle_{N,\beta} - 
  \frac 1{2N} \frac d{d\beta} \left[\frac {1}{f''_N(\beta)} \frac
    {d\langle AB \rangle_{N,\beta}}{d\beta}\right] + o\left(\frac{1}{N}\right) \|AB\|_{2,\beta,N}\, .
\end{equation*}
while
\begin{equation*}
  \begin{split}
    \langle A| u_{N}(\beta) \rangle_{N}\langle B| u_{N}(\beta)  \rangle_{N} = &
    \langle A \rangle_{N,\beta}\langle B \rangle_{N,\beta} 
    - \frac 1{2N} \left( \langle A \rangle_{N,\beta} \frac d{d\beta}
    \left[\frac {1}{f''_N(\beta)} \frac 
      {d\langle B \rangle_{N,\beta}}{d\beta}\right] + \right.\\ 
  &\left.+\langle B
    \rangle_{N,\beta} \frac d{d\beta}
    \left[\frac {1}{f''_N(\beta)} \frac 
    {d\langle A \rangle_{N,\beta}}{d\beta}\right]\right) 
+ C_N 
\\ 
= &  \frac 1{N} \frac {1}{f''_N(\beta)}  \frac{d\langle A
  \rangle_{N,\beta}}{d\beta} \frac{d\langle B
  \rangle_{N,\beta}}{d\beta} -
\frac 1{2N} \frac d{d\beta} \left[\frac {1}{f''_N(\beta)} \frac
    {d(\langle A \rangle_{N,\beta}\langle B
      \rangle_{N,\beta})}{d\beta}\right]
  + C_N 
 \end{split}
\end{equation*}
where $C_N$ contains all term of smaller order and is bounded by
\begin{equation*}
  |C_N| \ \le \ o\left(\frac{1}{N}\right)\, \|A\|_{2,\beta,N}\|B\|_{2,\beta,N}.
\end{equation*}

Then defining the correlations
\begin{equation}
  \label{eq:4}
  \begin{split}
     \langle A; B | u_{N}(\beta)\rangle_{N} &:=  \langle A B |
     u_{N}(\beta)\rangle_{N} - \langle A| u_{N}(\beta)\rangle_{N}
     \langle B | u_{N}(\beta)\rangle_{N} ,\\
  \langle A; B \rangle_{\beta,N} &:=  \langle A B\rangle_{\beta,N} - \langle A\rangle_{\beta,N}
     \langle B \rangle_{\beta,N} ,
  \end{split}
\end{equation}
we get the formula for the equivalence of the correlations:
\begin{equation}
  \label{eq:6}
  \begin{split}
    \langle A; B | u_{N}(\beta)\rangle_{N} &= \langle
    A;B\rangle_{N,\beta} - \frac 1{N} \frac {1}{f''_N(\beta)}
    \frac{d\langle A \rangle_{N,\beta}}{d\beta}\frac{d\langle B
      \rangle_{N,\beta}}{d\beta} 
    - \frac 1{2N} \frac d{d\beta} \left[\frac {1}{f''_N(\beta)} \frac
    {d\langle A;B \rangle_{N,\beta}}{d\beta}\right] \\ &+o\left(\frac{ 1}{N} \right)\left(\,\|AB\|_{2,\beta,N}+
    \|A\|_{2,\beta,N}\|B\|_{2,\beta,N}\right) .
  \end{split}
\end{equation}

\begin{remark}
This formula is different than the one of reference
    \cite{LPV}.  The term with the derivative of the canonical
    correlation is in general smaller than the others. It can be
    even smaller than the error term as we will see evaluating the fluctuations of the kinetic energy below.
\end{remark}
\begin{remark}
  For extensive variables, like $A = \sum_{i=1}^N p_i^2$, typically we have 
  $\|A\|_{2,\beta,N}\sim N$, that implies that the error in  
  \eqref{eq:6} is of order $o(N)$. But in these cases the other terms are of order $N$.
\end{remark}


\subsection{Fluctuations of kinetic energy}
\label{sec:fluct-kinet-energy}

Consider the kinetic energy 
$$
K({\bf p})= \sum_{j=1}^N \frac{p_j^2}2.
$$ 
Then, if $n$ is the space dimension,
\begin{equation*}
  \langle K \rangle_{N,\beta} = \frac{ N n}{2\beta}, \quad \langle K^2\rangle_{N,\beta}=\frac{N(N+2)n^2}{4\beta^2}
\quad \langle K;K \rangle_{N,\beta} = \frac{Nn^2}{2\beta^2 } 
\end{equation*}
and
\begin{equation*}
  \frac {d\langle K \rangle_{N,\beta}}{d\beta}  = - \frac {Nn}{2\beta^{2}}, \quad
  \frac {d  \langle K;K \rangle_{N,\beta}}{d\beta} = - \frac{Nn^2}{\beta^{3}}
\end{equation*}
applying equation (\ref{eq:6}) we obtain
\begin{equation}
\begin{split}
 \label{eq:7}
  \langle K; K|u_N(\beta)\rangle_N - \langle K;K \rangle_{N,\beta} = &-\frac{n^2N}{4\beta^4 f''_N (\beta)}  +
  \frac 12  \frac d{d\beta} \left(\frac {n^2}{f''_N \beta^3}\right)\\ & +  o\left(\frac{ 1}{N} \right)\, \left( \|K^2\|_{2,\beta,N}+\|K\|^2_{2,\beta,N}\right)\,.
\end{split}
\end{equation}
Observe that  as $\|K\|_{2,\beta,N}\sim N/\beta$ and $ \|K^2\|_{2,\beta,N}\sim N^2/\beta^2$  
the second term in the r.h.s of (\ref{eq:7}) is smaller than the error term.
Dividing by $N$, we obtain for the variances of $K/\sqrt N$:
\begin{equation}\label{eq:lpvk}
    \frac{1}{N}\langle K; K|u_N(\beta)\rangle_N= \frac{n}{2\beta^2 } -
    \frac {n^2}{4\beta^4 f''_N(\beta)} + o(1)\\
    = \frac{n}{2\beta^2 }\left(1 -
    \frac {n}{2C_N(\beta)}\right)+ o(1)
\end{equation}
The quantity $C_N(\beta) = \beta^2  f''_N(\beta)$ is called heat capacity (per
particle). This is in fact equal to $\frac {d}{d\beta^{-1}} u_N(\beta)$. 
Notice that \eqref{eq:lpvk} coincide, up to terms of lower order in $N$, to formula (3.7) in 
\cite{LPV}.

Notice in particular that the asymptotic canonical and microcanonical variances of 
$\frac 1{\sqrt N} K_N$ are different. 
Denoting by $V$ the total potential energy, since $K+V$ is constant under the microcanonical measure,
we have that $\langle K; K|u_N(\beta)\rangle_N = \langle V; V|u_N(\beta)\rangle_N$, so the same formula is valid for  $ \langle V; V|u_N(\beta)\rangle_N$.

It remains to prove the conditions of theorem \ref{lpv-th} are satisfied by
$\langle K_N; K_N |u\rangle_{N} $, but this in general depends on the
model considered, i.e. on the interaction between the particles.  

In \autoref{sec:local-large-devi} we have defined
\begin{equation*}
  W_N(u) = \frac{d}{du} \Omega_N(u)
\end{equation*}
where 
\begin{equation*}
  \Omega_N(u) = \int_{\mathbb R^N} d{\bf p} \int_{\mathbb R^N} d{\bf q}\ 
  \theta\left(N(u - h_N({\bf p},{\bf q}))\right)
\end{equation*}
where the Heaviside unit step function $\theta(x)$ is defined by $\theta(x)=0$ for $x<0$ 
and $\theta(x)=1$ for $x\ge 0$. 
Using the N-spherical coordinates on the momentum variables, this can be written as
\begin{equation*}
  \begin{split}
    \Omega_N(u) = S_{N-1} \int_{\mathbb R^N} d{\bf q} \int_0^\infty
    \rho^{N-1} \theta\left(Nu - \frac{\rho^2}2 - V({\bf q})\right)
    d\rho \\
    = S_{N-1} \int_{\mathbb R^N} d{\bf q}\ \theta\left(Nu - V({\bf q})\right)\int_0^{\sqrt{2(Nu - V({\bf q}))}} \rho^{N-1} d\rho \\
    = S_{N-1} \frac{2^{N/2}}{N} \int_{\mathbb R^N} d{\bf q}\ \left(Nu - V({\bf q})\right)^{\frac N2} \theta\left(Nu - V({\bf q})\right) 
 \end{split}  
\end{equation*}
where $S_{N-1}= 2\pi^{N/2}/\Gamma(N/2)$ is the surface of the $N-1$ dimensional unit sphere. 
Consequently
\begin{equation}
  \label{eq:22}
   W_N(u) = \frac {(2\pi)^{N/2}N}{\Gamma(N/2)}
   \int_{\mathbb R^N} d{\bf q} \left(Nu  - V({\bf q})\right)^{\frac{N}{2} - 1} \ \theta\left(Nu - V({\bf q})\right) 
\end{equation}
This formula goes back to Gibbs (\cite{gibbs}, chapter 8, (308)), 
one can prove that $W_N(u)$ is at least $\left[ \frac N2 -1\right]$ 
times differentiable see \cite{DH}. 

For any observable $A$, the micro canonical mean can be written as 
\begin{equation}\label{MCEA}
\langle A\,| u \rangle_N=
\frac{\frac{\partial}{\partial u}\int\, d{\bf p}\, d{\bf q}\, \theta(Nu-H({\bf p},{\bf q})) A({\bf p},{\bf q})}
{W_N(u)} 
\end{equation}
Using the $N$ dimensional spherical momentum coordinates as above,
one can write for the micro canonical mean of the kinetic energy  as
  \begin{equation*}
    \begin{split}
      \langle K\,|\,u\rangle_N&= W_N(u)^{-1} \left(\frac{2(2\pi)^{N/2} N}{\Gamma(N/2)} \int_{\bbR^N}\, d{\bf q}\,
        (Nu-V({\bf q}))^{\frac{N}{2}}\theta(Nu-V({\bf q})) \right)
\\ &= \frac{N^2 \Omega_N(u)}{W_N(u)} = \frac{ 2 \int_{\bbR^N}\, d{\bf q}\,
        (Nu-V({\bf q}))^{\frac{N}{2}}\theta(Nu-V({\bf q}))}
      {\int_{\bbR^N}\, d{\bf q}\, (Nu-V({\bf
          q}))^{\frac{N}{2}-1}\theta(Nu-V({\bf q}))}
    \end{split}
\end{equation*}

Of course we have the trivial bound $\langle K\,|\,u\rangle_N \le  Nu$.
Furthermore, since the micro canonical distribution is symmetric in the $\{p_j, j=1, \dots,N\}$, we have
\begin{equation}
  \label{eq:25}
   \frac 12 \langle p_j^2\,|\,u\rangle_N = \frac{ 2 \int_{\bbR^N}\, d{\bf q}\,
        (Nu-V({\bf q}))^{\frac{N}{2}}\theta(Nu-V({\bf q}))}
      {N\int_{\bbR^N}\, d{\bf q}\, (Nu-V({\bf
          q}))^{\frac{N}{2}-1}\theta(Nu-V({\bf q}))}
\end{equation}

An analogous calculation brings to 
\begin{align}\label{K2}
\langle K^2\,|\,u\rangle_N=
\frac{2^2 \int_{\bbR^N}\, d{\bf q}\, (Nu-V({\bf q}))^{\frac{N}{2} +1}\theta(Nu-V({\bf q}))}
{\int_{\bbR^N}\, d{\bf q}\, (Nu-V({\bf q}))^{\frac{N}{2}-1}\theta(Nu-V({\bf q}))} 
\end{align}
We can rewrite these expression by using the micro canonical potential energy weight:
\begin{equation}
  \label{eq:26}
  \widetilde W_N(v) := \frac{d}{dv} \int_{\bbR^N} \theta(Nv-V({\bf q})) d{\bf q}.
\end{equation}
then
 \begin{equation*}
    \begin{split}
      \langle K\,|\,u\rangle_N= N \frac{ 2\int_0^u \, (u-v)^{\frac{N}{2}}  \widetilde W_N(v) dv}
      {\int_0^u (u-v)^{\frac{N}{2}-1} \widetilde W_N(v) dv}
    \end{split}
\end{equation*}
and 
\begin{align}\label{K2-1}
\langle K^2\,|\,u\rangle_N=  4N^2  \frac{ 2\int_0^u \, (u-v)^{\frac{N}{2} +1}  \widetilde W_N(v) dv}
      {\int_0^u (u-v)^{\frac{N}{2}-1} \widetilde W_N(v) dv}
\end{align}

These formulas imply that these microcanonical averages are at least $\left[ N/2\right]$ times differentiable in $u$ and the derivatives can be explicitely computed.

Starting from expression (\ref{K2}) we give a qualitative argument to understand why conditions (i)-(iii) in section \ref{sec:lebow-perc-verl} should
 be satisfied for extensive observables. 
We then present an example where most calculations can be made exactly.  
From (\ref{K2}) one can see that  dimensionally the micro canonical mean of $K^2$ behaves as 
$N^2u^2$ and that the derivatives with respect to $u$ are well defined till the order $N/2-1$. 
The third derivative of $\langle K^2\,|\,u\rangle_N$ behaves dimensionally as $N^2/u$. 
Thus, as the canonical norm $\| K\|_{2,\beta,N}^2=N(N+2)/(4\beta)$ 
and $u_N(\beta)$ does not grow in $N$, the required conditions are,
 at  least dimensionally, satisfied.
 The same reasoning can be extended to any extensive or intensive quantity 
looking directly expression (\ref{MCEA}). 

\subsection{ Exactly solvable one dimensional model}
\label{sec:oned}

We here introduce the one dimensional model system studied in \cite{DH} where conditions (\ref{deri3}) 
can be explicitly satisfied. 

Consider $N$ identical point particles confined by a one dimensional box of size $L$. The Hamiltonian is
\begin{equation}
H({\bf p},{\bf q})=\sum_{i=1}^N\frac{p_i^2}{2m}+V({\bf q})=E
\end{equation}
The potential energy $V=V_{\rm int}+V_{\rm box}$ is determined by the interaction potential 
$$
V_{int}({\bf q})=\frac 12\sum_{i,j=1\atop i\neq j} V_{\rm pair}(|q_i-q_j|)
$$
and the box potential
$$
V_{\rm box}({\bf q})=\begin{cases}0 \quad &{\bf q}\in[0,L]^N\\
+\infty\quad & {\rm otherwise}.
\end{cases}
$$
The pair potential is given by
$$
V_{\rm pair}(r)=\begin{cases} \infty\quad & r\le d_{hc}\\
-U_0\quad & d_{hc}< r< d_{hc}+r_0\\
0\quad&r\ge d_{hc}+r_0
\end{cases}
$$
where $d_{hc}>0$ is the hard core diameter of a particle with respect to pair interactions. The pair potential above can be viewed as a simplified Lennard-Jones potential. The depth of the potential well is determined by the binding energy parameter $U_0>0$ and the interaction range by the parameter $r_0$. It is assumed  
$$
0<r_0\le d_{hc}
$$
the latter condition ensures that particles may interact with their nearest neighbors only. In order to have the volume sufficiently large for realizing the completely dissociated state, corresponding to $V=0$ it is $L>L_{\rm min}\equiv(N-1)(d_{hc}+r_0)$. The energy $E$ of the system can take values
between the ground state energy $E_0=-(N-1)U_0$ and infinity.
Following the calculations of \cite{DH} expression (\ref{K2}) for this model becomes 
\begin{equation}\label{K2mod}
\langle K^2\,|\,u\rangle_N=\frac{\sum_{k=0}^{N-1}\omega_k(Nu+kU_0)^{\frac{N}{2}+1}\theta(Nu+kU_0)}{\sum_{k=0}^{N-1}\omega_k(Nu+kU_0)^{\frac{N}{2}-1}\theta(Nu+kU_0)}
\end{equation}
where $\omega_k$ are positive coefficient depending on $N$ and $L$ see \cite{DH} for more details. Furthermore the canonical mean energy per particle 
\begin{equation*}
u_N(\beta)=\frac{1}{2\beta}-\frac{U_0}{N}\,\frac{\sum_{k=0}^{N-1}\,k\,\omega_k\,
  e^{-\beta k U_0}}{\sum_{k=0}^{N-1}\,\omega_k\, e^{-\beta k U_0}}. 
\end{equation*}
so that 
\begin{equation}\label{ucan}
\frac{1}{2\beta}-U_0\le u_N(\beta)\le \frac{1}{2\beta}
\end{equation}
Expression (\ref{K2mod}) shows that $\langle K^2\,|\, u\rangle_N$ does not vanish iff $u+\frac{N-1}{N}U_0\ge 0$ this implies $u+U_0>0$. Expression (\ref{K2mod}) is explicit but complicate. To verify that $\langle K^2\,|\,u\rangle_N$ satisfies conditions  (i)-(iii) we consider the particular case of $-1+2/N\le u< -1+3/N$ so that
$$
\langle K^2\,|\,u\rangle_N=\,\frac{\omega_{N-1}+\omega_{N-2}
\left(1-\frac1N\frac{U_0}{u+U_0}\right)^{N/2+1}}{\omega_{N-1}+\omega_{N-2}
\left(1-\frac1N\frac{U_0}{u+U_0}\right)^{N/2-1}}\,N^2(u+U_0)^2\,
$$
where we use to simplify the formulas $u+\frac{N-1}{N}U_0\sim u+U_0$ for $N$ large.
Calculating the derivatives of (\ref{K2mod})  (we omit the calculation) one can show that there exists a positive constant $A$ such that
\begin{equation}\label{K3}
\begin{split}
\langle K^2\,|\,u\rangle_N&\le \,N^2(u+U_0)^2\\
\left|\frac{d}{du}\langle K^2\,|\,u\rangle_N\right|&\le \,A\,N^2(u+U_0)\\
\left|\frac{d^2}{du^2}\langle K^2\,|\,u\rangle_N\right|&\le \,A\,N^2\left(\frac{U_0}{u+U_0}+\frac{U_0^2}{(u+U_0)^2}\right)\\
\left | \frac{d^3\langle K^2\,|\,u\rangle_N}{du^3}\right |&\le \, A\,
N^2\left[
  \frac{U_0}{(u+U_0)^2}+\frac{U_0^2}{(u+U_0)^3}+\frac{U_0^3}{(u+U_0)^4}\right] 
\end{split}
\end{equation}
Remembering that 
$$
\langle K^2\rangle_{N,\beta}=\frac{N(N+2)}{4\beta^2}
$$
by (\ref{ucan}) and (\ref{K3}) conditions (i)-(iii) of theorem \autoref{lpv-th} are satisfied.

\section{Thermodynamic limit}
\label{sec:thermodynamic-limit}

All the statements in the previous sections are for finite $N$,
under the assumption that $f_N(\beta)$ is bounded in $N$ along with
the first four derivatives. By definition $f_N(\beta)$ is analytical in $\beta$. 
Assume now that $f_N(\beta)$ converges to $z(\beta)$ which is analytical in $\beta$. Then all the derivatives of $f_N(\beta)$ converge to the derivatives of $z(\beta)$ and conditions (\ref{eq:11}) are satisfied. We thus have
\begin{equation*}
  f_N'(\beta) \to z'(\beta) = -u(\beta), \qquad  f_N''(\beta) \to
  z''(\beta) = \chi(\beta)
\end{equation*}
Usual thermodynamic notations denote $F(\beta^{-1}) = -\beta^{-1}
z(\beta)$ the \emph{free energy}, $\chi(\beta)$ \emph{heat capacity},
and $s(u) = - z^*(u) = -\lim_{N\to\infty} f_{N*}(u)$
the \emph{thermodynamic entropy}. It follows the Boltzmann formula:
\begin{equation}
  \label{eq:Boltzmann}
  s(u) = \lim_{N\to\infty} \frac 1N \log W_N(u) 
\end{equation}
Also we denote 
\begin{equation}
  \label{eq:3}
  I_\beta(u) = \lim_{N\to\infty} I_{\beta,N}(u) = \beta u - s(u) + z(\beta) 
\end{equation}
that is the rate function for the large deviations of $h_N$ is the
infinite Gibbs state defined bu DLR equations. 

In absence of phase transition, i.e. $I_\beta(u)=0$ only for $u=
z'(\beta)$, then the equivalence on ensembles follows from \eqref{eq:1}. 
Differentiability of the limit of $f_N(\beta)$ depends on the system we are considering. 
In next section we give examples where analycity  of $z(\beta)$ 
is assured at least for $\beta$  small enough.

\section{Examples}
\label{sec:example}

\subsection{Independent case}
Consider a system of $N$ noninteracting particles  in a potential. This
is the case $  V(q_i,{\bf\bar{q}}_i)] = V(q_i)$.
The Hamiltonian can be written as the sum of $N$ identical terms 
\begin{equation}\label{hamiltoniana}
H_N({\bf p},{\bf q})=\sum_{i=1}^N h(p_i,q_i)
\end{equation}
Consequently $f_N(\beta)$ does not depend on $N$ and is a smooth function of $\beta$ if $V$ is a nice reasonable potential.

\subsubsection{Independent harmonic oscillators}

Consider a system of $N$  harmonic oscillators in dimension $d$. The Hamiltonian is given by
\begin{equation}\label{ham}
H=\sum_{i=1}^{N}\left[\frac{p_i^2}{2}+\frac{q_i^2}{2}\right]
\end{equation}
To simplify notations take $n=1$. 
Explicitely we have
\begin{equation*}
  f(\beta) = \log(2\pi\beta^{-1})
\end{equation*}
and $z'(\beta) = -\beta^{-1}$, $z''(\beta) = \beta^{-2}$, so that the heat capacity here is $z''(\beta) \beta^{2} = 1$.

If we calculate the expected
value of the kinetic energy $K$ with respect to the canonical measure
at inverse temperature $\beta$ we obtain 
\begin{equation}\label{meancanK}
\langle K\rangle_\beta=\frac{N}{2\beta}
\end{equation}
The fluctuations (the {\it variance}) of $K$ are given by
\begin{equation}\label{flucanK}
\langle K; K\rangle_\beta =\frac{N}{2\beta^2}
\end{equation}
The expected value of $K$ in the with respect to the microcanonical measure is given by
\begin{equation}\label{meanmicroK}
\langle K| u\rangle_N=\frac{Nu}{2}
\end{equation}
and 
\begin{equation}
\langle K^2| u\rangle_N =\frac{N+2}{4(N+1)}\,(Nu)^2
\end{equation}
This imply that the microcanonical variance is given by
\begin{equation}\label{flumicroK}
\langle K; K | u\rangle_N =  \langle K^2| u\rangle_N -\langle K| u\rangle_N^2 
= \frac{(Nu)^2}{4(N+1)}
\end{equation}
Since $\langle h_N\rangle_\beta= u_N(\beta) = \frac{1}{\beta}$, we have
\begin{equation}\label{ciao1}
\langle K; K | u_N(\beta)\rangle_N  - \langle K;K \rangle_{N,\beta} 
= \frac{N^2}{4(N+1)\beta^2} - \frac{N}{2\beta^2} 
= - \frac{N}{4\beta^2} \left(1 + \frac 1{N+1} \right),
\end{equation}
that coincide with the general formula \eqref{eq:7}.

\subsection{Mean Field}
\label{sec:mean-field}

\begin{equation}
  \label{eq:8}
  h_N = \frac 1N \sum_1^N \frac{p_i^2}{2} + \frac 1{N^2} \sum_{i,j =1}^N
  V(q_i, q_j)
\end{equation}
Where $V$ is a symmetric reasonable potential such that $\int e^{-\beta V} dq_1
dq_2 < +\infty$ for any $\beta>0$.
One can check by direct computation, using the symmetry of the potential that
$f_N^{(j)}(\beta)$ are uniformly bounded in $N$.

\subsection{Massless surface}
\label{sec:massless-surface}
On the lattice $\bbZ^nu$:
\begin{equation}
  \label{eq:9}
   h_N = \frac 1{N^\nu} \sum_1^{N^\nu} \frac{p_i^2}{2} + \frac
   1{N^\nu} \sum_{<i,j>}^{N^\nu} V(q_i - q_j)
\end{equation}
 For $\nu = 1$ defining $r_i = q_i-q_{i-1}$, we are back to the
 independent case.

For $\nu\ge 2$, under certain conditions on $V$, there is a polynomial
decay of correlations. 
\emph{Check Spencer review}

\subsection{Real Gas}
\label{sec:real-gas}

Consider a system of $N$ particles interacting with a stable and tempered  pair potential $V\colon \bbR^d\rightarrow\bbR\cup\{\infty\}$, i.e., there exists $B\ge 0$ such that:
$$\sum_{1\le i\le j\le N}V(q_i-q_j)\ge -BN$$
for all $N$ and all $q_1,\cdots,  q_N$ and the integral
$$C(\beta)=\int_{\bbR^d}|e^{-\beta V(q)}-1|dq$$
is convergent for some $\beta>0$ (and hence for all $\beta>0$. In \cite{PT} it has been proved the validity of cluster expansion for the canonical partition function in the high temperature - low density regime. This implies that the thermodynamic free energy is analytic in $\beta$ if $\beta$ and the density are small enough. Conditions (\ref{eq:11}) are thus satisfied.

\subsection{Unbounded spin systems with finite range potential.}
\label{sec:inter-cont-spins}
We consider here the unbounded spin systems studied in \cite{BH}.
For any domain $\Lambda$ of $\bbZ^d$, with $|\Lambda |=N$, we consider the following ferromagnetic Hamiltonian on the phase space $\bbR^\Lambda$ defined as follows
$$H_N({\bf q})=\sum_{j=1}^N\left[ \phi(q_j)+\sum_{i\sim j} V( q_i, q_j)\right]=\sum_{j=1}^N X_j$$
where $i\sim j$ means that the sum is over the sites that are at distance $R>0$ from $j$. 
Here $\phi$ is a one particle phase on $\bbR$ with at least quadratic increase at infinity, $V$ is a convex function on $\bbR$ with bounded second derivative, i.e. $|V''(t)|\le C$.
As the kinetic energy term is not present to use Theorem \ref{the2:lclt} we need to prove that the characterstic function $\varphi_N(t)$ of the centered energy has modulus $|\varphi_N(t)|<1$ and $|\varphi_N(t)|$ is integrable. We have to prove an analogous of (\ref{cfub}) which assures that the probability density function of the variable $S_N$ exists. The finite range of the potential is sufficient to prove both properties.
Define a $\Lambda_R\subset\Lambda$
$$\Lambda_R=\{i\in\Lambda\,\colon\, d(i,j)>2R\}$$
and $$Y_k=\phi(q_k)+2\sum_{i\sim k}V(q_i,q_k)$$
we can write the Hamiltonian as 
$$H_N({\bf q})=\sum_{k\in\Lambda_R}Y_k+H_{\Lambda\setminus\Lambda_R}$$
where $H_{\Lambda\setminus\Lambda_R}$ depends only on the variables in $\Lambda\setminus\Lambda_R$. For any $\Lambda\subset\bbZ^d$, let $\nu_{\beta,\Lambda}$ be the canonical measure defined by the Hamiltonian defined above and indicate by $E_{\beta,\Lambda}$ the expectation value w.r.t. $\nu_{\beta,\Lambda}$. Then
\begin{align*}
\varphi_N(t)= E_{\beta,\Lambda}(e^{it\sum_{k\in\Lambda_R}Y_k+itH_{\Lambda\setminus\Lambda_R}})&=E_{\beta,\Lambda}(e^{itH_{\Lambda\setminus\Lambda_R}}\,E_{\beta,\Lambda_R}(e^{it\sum_{k\in\Lambda_R}Y_k}))\\
&=E_{\beta,\Lambda}(e^{itH_{\Lambda\setminus\Lambda_R}}\prod_{k=1}^{|\Lambda_R|}E_{\beta,k}(e^{itY_k}))
\end{align*}
where in the last equality we used independence of the $\{Y_k\}$ variables due to the finite range potential. We thus have
$$
|\varphi_N(t)|\le E_{\beta,\Lambda}(\prod_{k=1}^{|\Lambda_R|}|E_{\beta,k}(e^{itY_k})|)=E_{\beta,\Lambda}(\prod_{k=1}^{|\Lambda_R|}|\varphi_k(t)|)
$$
The variables $\{Y_k\}$ have finite probability density. This implies that their characteristic functions $\{\varphi_k(t)\}$ have modulus strictly less than one for $t\neq 0$ (see \cite{F}). Furthermore such density is in $L^2$ so that, 
 by Plancherel equality, $|\varphi_k(t)|^2$ is integrable (see \cite{F}). These two properties of $\varphi_k(t)$ assure that the modulus of $\varphi_N(t)$ is strictly less than one for $t\neq 0$ and integrable for $|\Lambda_R|$ large enough so that, by the Fourier inversion theorem, the probability density function of the centered energy exists. 

In \cite{BH} exponential decay of correlations is proven for $\beta$ small enough which implies analycity of the free energy in the thermodynamic limit.

\begin{bibdiv}
\begin{biblist}

\bib{BH}{article}{
    AUTHOR = {Bodineau, T.} 
    AUTHOR={Helffer, B.},
     TITLE = {The log-{S}obolev inequality for unbounded spin systems},
   JOURNAL = {J. Funct. Anal.},
    VOLUME = {166},
      YEAR = {1999},
    NUMBER = {1},
     PAGES = {168--178},
      }

\bib{CM}{article}{
author={Cancrini, N.},
      author={Martinelli, F.},
       title={Comparison of finite volume Gibbs measures under mixing condition},
        date={2000},
     journal={Markov Processes. Rel. Fields},
      volume={6},
       pages={1\ndash 49},
}

\bib{DF}{article}{
author={Diaconis, P.},
      author={Friedman, D.},
       title={A dozen De Finetti-style results in search of a theory},
        date={1987},
     journal={Ann. Inst. H. Poincar\'e, Probabilit\'es et Statistiques},
      volume={23},
       pages={397\ndash 423},
}

\bib{DS}{article}{
author={Dobrushin, R.L.},
      author={Shlosman, B>},
       title={Large and Moderate Deviations in the Ising Model},
        date={1994},
     journal={Adv. Soviet Math.},
      volume={20},
       pages={1\ndash 130},
}

\bib{DT}{article}{
author={Dobrushin, R.L.},
      author={Tirozzi, B>},
       title={The Central Limit Theorem and the Problem of Equivalence of Ensembles},
        date={1977},
     journal={Comm. Math. Phys.},
      volume={54},
       pages={173\ndash 192},
}

\bib{DH}{article}{
    AUTHOR = {Dunkel, J.}
    AUTHOR={Hilbert, S.},
     TITLE = {Phase Transitions in small systems: Microcanonical vs. canonical ensembles},
   JOURNAL = {Physica A},
    VOLUME = {370},
      YEAR = {2006},
      PAGES = {390--406},
      }

\bib{F}{book}{
     author={Feller W.}
     title={An introduction to probability theory and its applications}
     volume={II}
     date={1971}
     editor={Wiley}
     edition={3}
 }

\bib{gibbs}{book}{
  author={Gibbs, J.W.}
            title={Elementary Principles in Statistical Mechanics}
        date={1902}
        editor={Yale Univ. Press, New Haven. Reprinted by Ox Bow Press, 1981.}
        } 

\bib{IL}{book}{
        author={Ibragimov, I.A.}
        author={Linnik, Yu. V.}
        title={Independent and Stationary Random Variables}
        date={1971}
        editor={Wolters-Noordhoff, Groningen}
        } 

\bib{EG}{book}{
        author={Evans, L}
        author={Gariepy, R.}
        title={Measure Theory and Fine Properties of Functions}
        date={1992}
        editor={CRC}
        } 
  
\bib{LPV}{article}{
     author={Lebowitz, J.L.},
      author={Percus, J.K},
      author={Verlet, L.}
       title={Ensemble Dependence of Fluctuations with Applications to Machine Computations},
        date={1967},
     journal={Phys. Rev},
      volume={153},
      number={1},
       pages={250\ndash 254},
}

\bib{O}{article}{
               author={Olla, S.}
               title={Large Deviations},
        date={2013},
     journal={Appunti lezioni},
      volume={},
      number={},
       pages={},
}

\bib{PT}{article}{
               author={Pulvirenti, E.}
               author={Tsagkarogiannis, D.}
               title={Cluster Expansion in the Canonical Ensemble},
        date={2012},
     journal={Commun. Math. Phys.},
      volume={316},
      number={},
       pages={289\ndash 306},
}

\bib{stroock1991microcanonical}{article}{
  author={Stroock, Daniel W.}
 author= {Zeitouni, Ofer},
 title={Microcanonical distributions, Gibbs states, and the equivalence of ensembles},
 journal={Festchrift in honour of F. Spitzer. Birkhauser},
  year={1991},
  pages={399\ndash 424}
}


\bib{T}{article}{
               author={Touchette, H.}
               title={Equivalence and nonequivalence of ensembles:
                 Thermodynamic, macrostate,  and measure levels},
        date={2014},
     journal={arXiv preprint arXiv:1403.6608},
      volume={},
      number={},
       pages={},
}

\end{biblist}
\end{bibdiv}

\end{document}